\newtheorem{theorem}[definition]{Theorem}
\newtheorem{claim}[definition]{Claim}
\newtheorem{proposition}[definition]{Proposition}
\newtheorem{corollary}[definition]{Corollary}
\newtheorem{lemma}[definition]{Lemma}
\newtheorem{questions}[definition]{Questions}
\newtheorem{problem}[definition]{Problem}
\newtheorem{exampleI}[definition]{Example}
\newtheorem{remarkI}[definition]{Remark}
\newenvironment{remark}{\begin{remarkI}\normalfont}{\end{remarkI}}
\newenvironment{example}{\begin{exampleI}\normalfont}{\end{exampleI}}
\numberwithin{equation}{section}
\numberwithin{definition}{section}
\newcommand{\fref}[1]{\ref{#1}}
\newcommand{\nopar}{}
\tikzset{obsdesEdgeI/.style={}}
\tikzset{obsdesEdgeII/.style={dashed}}
\tikzset{obsdesEdgeIII/.style={thick, dotted}}
\tikzset{obsdesnode/.style={
        circle,
        draw=black,
        fill=black,
        inner ysep=0cm,
        inner xsep=0cm,
        minimum width = 0.4em,
        minimum height = 0.4em
}
}
\providecommand*\coloneqq{\mathrel{\vcentcolon\mkern-1.2mu=}}
\newcommand{\ComCla}[1]{\ensuremath{\textup{\textbf{\textsf{#1}}}}}
\newcommand{\NP}{\ComCla{NP}}
\newcommand{\sharpP}{\ComCla{\#P}}
\newcommand{\wght}{z}
\renewcommand{\det}{\textup{det}}
\newcommand{\docc}{\textup{\textsf{docc}}}
\newcommand{\affine}{V}
\newcommand{\per}{\textup{per}}
\DeclareMathAlphabet\mathpzc{OT1}{pzc}{m}{it}
\newcommand{\easy}{\mathchoice{\scalebox{1.1}{\ensuremath{\mathpzc{c}}}}{\scalebox{1.1}{\ensuremath{\mathpzc{c}}}}{\scalebox{1.1}{\ensuremath{\scriptstyle\mathpzc{c}}}}{\scalebox{1.1}{\ensuremath{\scriptscriptstyle\mathpzc{c}}}}}
\newcommand{\hard}{\mathchoice{\scalebox{1.1}{\ensuremath{\mathpzc{h}}}}{\scalebox{1.1}{\ensuremath{\mathpzc{h}}}}{\scalebox{1.1}{\ensuremath{\scriptstyle\mathpzc{h}}}}{\scalebox{1.1}{\ensuremath{\scriptscriptstyle\mathpzc{h}}}}}
\newcommand{\Tup}{O_m}
\newcommand{\Tups}{S}
\newcommand{\tup}{s}
\newcommand{\tri}[1]{t_{#1}}
\newcommand{\Tset}{\mathscr{T}}
\newcommand{\omegaX}{X}
\newcommand{\omegaXM}{\mathcal{X}}
\newcommand{\truthlabel}{x}
\newcommand{\elll}{l}
\newcommand{\recta}[2]{#1\mathord{\times}#2}
\newcommand{\pl}[3]{p_{#1}({#2[#3]})}
\newcommand{\singlekron}[1]{k({#1})}
\newcommand{\symmkron}[2]{\mathchoice%
{\mathsf{sk}{\big({#2};\big(#1\big)^2\big)}}%
{\mathsf{sk}{\big({#2};\big(#1\big)^2\big)}}%
{\mathsf{sk}{({#2};(#1)^2)}}%
{\mathsf{sk}{({#2};(#1)^2)}}%
}
\newcommand{\yv}{y}
\newcommand{\eval}{\mathsf{eval}}
\newcommand{\symeval}{\mathsf{seval}}
\newcommand{\dotcup}{\mathrel{\dot{\cup}}}
\DeclareMathOperator{\mult}{mult}
\newcommand{\Weyl}[1]{\{{#1}\}}
\newcommand{\Specht}[1]{[{#1}]}
\newcommand\dash{\nobreakdash-\hspace{0pt}}
\newcommand{\Sym}{\mathsf{Sym}}
\newcommand{\GL}{\mathsf{GL}}
\newcommand{\tensor}{\smash{\textstyle\bigotimes}}
\newcommand{\IC}{\ensuremath{\mathbb{C}}}
\newcommand{\IZ}{\ensuremath{\mathbb{Z}}}
\newcommand{\aS}{\ensuremath{\mathsf{S}}}
\newcommand{\sV}{\ensuremath{\mathscr{V}}}
\newcommand{\sW}{\ensuremath{\mathscr{W}}}
\newcommand{\bra}{\langle}
\newcommand{\ket}{\rangle}
\DeclareMathOperator{\id}{id}
\newcommand{\HWV}{\mathsf{HWV}}
\DeclareMathAlphabet\matheulerscript{U}{eus}{m}{n}
\newcommand{\MaMu}{\ensuremath{\matheulerscript{M}}}
\newcommand{\Er}[1]{\ensuremath{\matheulerscript{E}_{#1}}}
\newcommand{\coord}[1]{\IC[#1]}
\newcommand{\la}{\lambda}
\DeclareMathOperator{\diag}{diag}
\newcommand{\proj}{\mathcal{P}}
\newcommand{\Hy}{\mathcal{H}}
\newcommand{\setp}{\Lambda}
\newcommand{\transpose}[1]{{{^t\!}{#1}}}
\newcommand{\br}{\ensuremath{\underline{R}}}
\newcommand{\ot}{\otimes}
\newcommand{\C}{\ensuremath{\mathbb{C}}}
\newcommand{\N}{\ensuremath{\mathbb{N}}}
\newcommand{\ol}[1]{{\overline{#1}}}
\newcommand{\partinto}[1][]{\smash{\mathord{\mathchoice{%
  \xymatrix@=0.4em@1{%
  \ar@{|-}[rr]_-*--{\scriptstyle #1}
  &*{\phantom{\scriptstyle{#1}}}&}
}{
  \xymatrix@=0.25em@1{%
  \ar@{|-}[rr]_-*--{\scriptstyle #1}
  &*{\phantom{\scriptstyle{#1}}}&}
}{
  \xymatrix@=0.2em@1{%
  \ar@{|-}[rr]_-*--{\scriptscriptstyle #1}
  &*{\phantom{\scriptscriptstyle{#1}}}&}
}{}}}}
\newcommand{\partintonosmash}[1][]{\mathord{\mathchoice{%
  \xymatrix@=0.4em@1{%
  \ar@{|-}[rr]_-*--{\scriptstyle #1}
  &*{\phantom{\scriptstyle{#1}}}&}
}{
  \xymatrix@=0.25em@1{%
  \ar@{|-}[rr]_-*--{\scriptstyle #1}
  &*{\phantom{\scriptstyle{#1}}}&}
}{
  \xymatrix@=0.2em@1{%
  \ar@{|-}[rr]_-*--{\scriptscriptstyle #1}
  &*{\phantom{\scriptscriptstyle{#1}}}&}
}{}}}
\newcommand{\partintostar}[1][]{\smash{\mathord{\mathchoice{%
  \xymatrix@=0.4em@1{%
  \ar@{|-}[rr]_-*--{\scriptstyle #1}^-*--{\scriptstyle \ast}
  &*{\phantom{\scriptstyle{#1}}}&}
}{
  \xymatrix@=0.25em@1{%
  \ar@{|-}[rr]_-*--{\scriptstyle #1}^-*--{\scriptstyle \ast}
  &*{\phantom{\scriptstyle{#1}}}&}
}{
  \xymatrix@=0.2em@1{%
  \ar@{|-}[rr]_-*--{\scriptscriptstyle #1}^-*--{\scriptstyle \ast}
  &*{\phantom{\scriptscriptstyle{#1}}}&}
}{}}}}
\newcommand{\partintostarnosmash}[1][]{\mathord{\mathchoice{%
  \xymatrix@=0.4em@1{%
  \ar@{|-}[rr]_-*--{\scriptstyle #1}^-*--{\scriptstyle \ast}
  &*{\phantom{\scriptstyle{#1}}}&}
}{
  \xymatrix@=0.25em@1{%
  \ar@{|-}[rr]_-*--{\scriptstyle #1}^-*--{\scriptstyle \ast}
  &*{\phantom{\scriptstyle{#1}}}&}
}{
  \xymatrix@=0.2em@1{%
  \ar@{|-}[rr]_-*--{\scriptscriptstyle #1}^-*--{\scriptstyle \ast}
  &*{\phantom{\scriptscriptstyle{#1}}}&}
}{}}}
\begin{document}
%
\conferenceinfo{STOC'13,}{June 1--4, 2013, Palo Alto, California, USA.}
\CopyrightYear{2013}
\crdata{978-1-4503-2029-0/13/06}
\clubpenalty=10000
\widowpenalty=10000

\title{Explicit Lower Bounds via Geometric Complexity Theory}
%
%
%
%
%

\numberofauthors{2} 
%
\author{
%
%
\alignauthor
Peter B\"urgisser\titlenote{partially supported by DFG-grant BU 1371/3-2}\\
       \affaddr{Institute of Mathematics}\\
       \affaddr{University of Paderborn}\\
       \affaddr{D-33098 Paderborn, Germany}\\
       \email{pbuerg@math.upb.de}
\alignauthor
Christian Ikenmeyer\titlenote{partially supported by DFG-grant BU 1371/3-2}\\
       \affaddr{Institute of Mathematics}\\
       \affaddr{University of Paderborn}\\
       \affaddr{D-33098 Paderborn, Germany}\\
       \email{ciken@math.upb.de}
}
\date{30 October 2012}

\maketitle
\begin{abstract}
We prove the lower bound 
$\underline R (\MaMu_m) \geq \frac 3 2 m^2-2$
on the border rank of $m\times m$ matrix multiplication
by exhibiting explicit representation theoretic (occurence) 
obstructions in the sense the 
geometric complexity theory (GCT) program. 
While this bound is weaker than the one recently 
obtained by Landsberg and Ottaviani,  
these are the first significant lower bounds obtained 
within the GCT program.
Behind the proof is an explicit description of  
the highest weight vectors in $\Sym^d\tensor^3 (\C^n)^*$ 
in terms of combinatorial objects, called obstruction designs.
This description results from analyzing  
the process of polarization and Schur-Weyl duality. 
\end{abstract}

\category{F.1.3}{Computation by abstract devices}{Complexity Measures and Classes}
\category{F.2.1}{Analysis of Algorithms and Problem Complexity}{Numerical Algorithms and
 Problems}[Computations on polynomials]



\keywords{geometric complexity theory; tensor rank; matrix multiplication; Kronecker coefficients; permanent versus determinant}

\subsection*{Acknowledgments}

We are grateful to Matthias Christandl, Jon Hauenstein, Jesko H\"uttenhain, J.M.\ Landsberg, 
and Michael Walter for important discussions.
We also thank Hang Guo, Stefan Mengel, and Tyson Williams for a valuable discussion on the coloring problem. 

\sloppy

\section{Introduction}

The complexity of matrix multiplication is captured by the rank of the matrix multiplication
tensor, a quantity that, despite intense research efforts, is little understood.
Strassen~\cite{stra:87} already observed that the closely related notion of border rank
has a natural formulation as a specific orbit closure problem.
The work~\cite{BI:10}  applied and further developed the collection of ideas from 
Mulmuley and Sohoni~\cite{gct1,gct2} to the tensor framework, which is 
simpler than the one for permanent versus determinant. 
However, the lower bound obtained in~\cite{BI:10} for the border rank $\br(\MaMu_m)$ of the 
$m\times m$ matrix multiplication tensor $\MaMu_m$ is ridiculously small. 
In this work, we considerably improve this bound and obtain 
the first significant lower bounds obtained within the GCT program.

In a first step, by analyzing the process of polarization and Schur-Weyl duality, 
we arrive at an explicit description of a system of generators~$f_\Hy$
of the spaces of highest weight vectors in 
$\Sym^d\tensor^3 (\IC^n)^*$ in terms of combinatorial objects~$\Hy$, 
called obstruction designs (cf.\ Theorem~\ref{th:OD}). 
We define the chromatic index $\chi'(\Hy)$ of obstruction designs
and prove that $f_\Hy(w)=0$ for all tensors $w\in\tensor^3\C^n$ having border rank 
less than $\chi'(\Hy)$ (Proposition~\ref{pro:chrom-vanish}). 
Our lower bound on the border rank of matrix multiplication 
results from choosing a particular family $(\Hy_m)$ of obstruction designs of 
chromatic index roughly $\frac32 m^2$ with the property that 
$f_{\Hy_m}$ does not vanish on the orbit of the tensor $\MaMu_m$ of $m$ by $m$ matrix 
multiplication. (Proving the nonvanishing is the technically most involved 
part of this paper, cf.\ Lemma~\ref{le:NEU}).  
We also show that, asymptotically, our lower bound is the best that can be 
obtained by applying Proposition~\ref{pro:chrom-vanish} 
(i.e., arguing via the chromatic index), 
provided a conjecture due to Alon and Kim~\cite{AK:97} is true. 

Evaluating $f_\Hy$, or testing whether $f_\Hy$ equals the zero polynomial, 
are challenging problems. It would be interesting to analyzing their complexity.

Our lower bound on the border rank of $\MaMu_m$ is slightly below the one by 
Strassen~\cite{stra:83-2} and Lickteig~\cite{Lic:84}, 
and also weaker than the very recent improvement by Landsberg and Ottaviani~\cite{LO:11}. 
We note that the recent preprint by Grigoriev et al.~\cite{GMP:12} also uses 
representation theory for proving lower bounds on border rank of $\MaMu_m$.  
(However, the lower bounds in~\cite{GMP:12} are substantially worse than 
the ones by Strassen and Lickteig.)

The main message of our paper is that 
significant lower bounds can be obtained with geometric complexity theory (GCT). 
As a further evidence for this, we note that recently, in collaboration with  
Jon Hauenstein and J.M.\ Landsberg, we managed 
to prove $\br(\MaMu_2) = 7$ using an explicit construction of 
highest weight vectors of weight~$\la = (5,5,5,5)^3$ 
and relying on computer calculations.
This is remarkable, since this 
was a long\dash standing open problem since the 70s, 
which was only settled in 2005 by
Landsberg \cite{Lan:05} using very different methods.

As a further contribution, we add to the discussion on the feasibility of the GCT approach 
by pointing out that in a modification of the approach, proving lower bounds is actually 
{\em equivalent} to providing the existence of obstructions (in the sense of highest weight vectors 
instead of just highest weights), cf.~Proposition~\ref{pro:obshwv}. 

This work 
contains results from the PhD thesis of the second author~\cite{ike:12b}.

\section{Orbit Closure Problems}
\label{sec:ocp}

\subsection{Border Rank}

Consider $W:=\tensor^3\C^{m^2}$. 
The {\em rank} $R(w)$ of  a tensor $w\in W$ 
is defined as the minimum~$r\in\N$
such that $w$~can be written as a sum of $r$ tensors of the 
form $w^{(1)}\ot w^{(2)}\ot w^{(3)}$ with $w^{(i)}\in\C^n$. 
Strassen proved~\cite{Str:73} that, up to a factor of two, $R(w)$ 
equals the minimum number of nonscalar multiplications sufficient 
for evaluating the bilinear map 
$(\C^{m^2})^*\times (\C^{m^2})^*\to\C^{m^2}$ corresponding to $w$.  
The {\em border rank} $\br(w)$  of a tensor $w\in W$ is defined as the
smallest $r\in \N$ such that $w$ can be obtained as the limit of a sequence
$w_k\in W$ with tensor rank $R(w_k)\le r$ for all $k$. 
Border rank is a natural mathematical notion that has played an important role 
in the discovery of fast algorithms for matrix multiplication, see~\cite[Ch.~15]{ACT}.

Now let $n\ge m^2$ and think of $W=\tensor^3\IC^{m^2}$ as embedded in 
$V:=\tensor^3\C^n$ via an embedding $\IC^{m^2} \subseteq \IC^n$. The group $G:=\GL_n^3$ acts on~$V$ via  
$(g_1,g_2,g_3)(w^{(1)}\ot w^{(2)}\ot w^{(3)}) := g_1(w^{(1)})\ot g_2(w^{(2)})\ot g_3(w^{(3)})$. 
We shall denote by 
$Gw:=\{g w \mid g \in G\}$ 
the {\em orbit} of~$v$ and call its closure $\ol{Gw}$ 
with respect to the euclidean topology 
the {\em orbit closure} of~$w$.
 
It will be convenient to use Dirac's bra-ket notation. So 
$|i\ket$,  $1 \leq i \leq n$, denotes the standard basis of $\C^n$ 
and $\bra i|$ denotes its dual basis. Further, 
$|ijk\ket$ is a short hand for $|i\ket\ot|j\ket\ot|k\ket\in V$.
We call 
$\Er n \coloneqq \sum_{i=1}^n |iii\ket \in V$
the \emph{$n$-th unit tensor}.

Suppose that $\br(w)\ge m$ to avoid trivial cases. 
Then it is easy to see that
$\br(w)\le n$ iff $w \in \ol{G{\Er n}}$, cf.~\cite{stra:87}.

The tensor corresponding to the $m \times m$ matrix multiplication map 
can be succinctly written as 
\begin{align}
\label{eq:defmamu}
 \MaMu_m \coloneqq \sum_{i,j,l=1}^m |(i,j)(j,l)(l,i)\ket \in \tensor^3 \IC^{m\times m}.
\end{align}

\subsection{Approximate Determinantal Complexity}

We switch now the scenario and take 
$V:= \Sym^n \C^{n^2}$, which is  the 
homogeneous part of degree $n$ of the polynomial ring $\C[X_1,\ldots,X_{n^2}]$.
The determinant $\det_n$ of an $n\times n$ matrix in these variables is an element of $V$.
The group $G:=\GL_{n^2}$ acts on~$V$ by linear substitution.
Further, let $m < n$, and put $z:=X_{m^2+1}$, $W:= \Sym^m \C^{m^2}$. 
We define the \emph{determinantal orbit closure complexity} $\docc(f)$ 
of~$f\in W$ as the minimal $n$ such that
$ z^{n-m}f \in \ol{G \det_n}$.

In \cite{gct1} Mulmuley and Sohoni conjectured the following:
\begin{equation}\label{MS-conj}
\mbox{$\docc(\per_m)$ is not polynomially bounded in~$m$.}
\end{equation}
Here $\per_m\in W$ denotes the permanent of the 
$m\times m$ matrix in the variables $X_1,\ldots,X_{m^2}$.  

An affirmative answer to this conjecture implies that 
$\det_n$ cannot be computed by weakly skew circuits of size polynomial in $m$,
(cf.~\cite{BLMW:11}), which is a version of  Valiant's Conjecture~\cite{Val:79b}.

\subsection{Unifying Notation}
\label{subsec:unifnot}

The {\em tensor scenario} and the {\em polynomial scenario}
discussed before have much in common 
and we strive to treat both situations simultaneously. 
Hence for fixed $n$ and $m$ we want to use the notations 
summarized in the following table. 
\[\begin{array}{l||l|l}
\text{notation} & \text{determinantal} & \text{border rank}\\
& \text{complexity} & (n \geq m^2)\\
  & (n \geq m+1) & \\
\hline
G & \GL_{n^2} & \GL_{n} \times \GL_{n} \times \GL_{n}\\
V & \Sym^n \C^{n^2} & \tensor^3 \C^n\\
\eta = \dim V & \binom{n^2+n-1}{n}  & n^3\\
W \subseteq V & \Sym^m \C^{m^2} & \tensor^3 \C^{m^2}\\
\hard\coloneqq\hard_{m,n} \in W & z^{n-m}\per_m & \MaMu_m \\
\easy\coloneqq\easy_n \in V & \det_n & \Er n
\end{array}\]
The symbol~$\hard$ stands for the hard problem for which we want to prove lower bounds
and the orbit closure $\overline{G\easy_n}$ is exactly the set of all elements in~$V$ with complexity at most~$n$.
In both scenarios, for a given~$m$, we try to find~$n$
as large as possible such that \[\hard_{m,n} \notin \overline{G\easy_n}.\]
Since the orbit closure is the smallest closed set containing the orbit,
this is equivalent to proving
$\ol{G \hard_{m,n}} \not\subseteq \overline{G\easy_n}$. 
If we want to treat $G\easy$ and $G\hard$ simultaneously, we just write $Gv$.

\section{The Flip via Obstructions}  

Let $V \simeq \IC^\eta$ and $v\in V$ in one of the two scenarios above. 
We write $\coord{V} := \IC[T_1,\ldots,T_\eta]$ for the ring of
polynomial functions on $V$.
It is a fundamental fact from algebraic geometry that the 
orbit closures $\ol{Gv}$ (defined via the euclidean topology) 
are in fact Zariski closed, i.e., zero sets of polynomials on $V$
(cf.~\cite[AI.7.2]{Kra:85}). This immediately implies the following observation.

\begin{proposition}
\label{cor:polyobstrexist}
Let $\hard \in \affine$.
If $\hard \notin \overline{G\easy}$, 
then there exists a polynomial
$f \in \coord{\affine}$ that vanishes on $\overline{G\easy}$ but not on~$\hard$.
\end{proposition}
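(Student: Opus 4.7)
The plan is to apply the Zariski-closedness of $\overline{G\easy}$ as noted immediately before the statement of the proposition, which was cited from \cite[AI.7.2]{Kra:85}. This gives essentially a one-line argument.

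First I would recall that a subset $X\subseteq V$ is Zariski closed precisely when it is the common zero locus $\mathcal{V}(I)$ of some ideal $I\subseteq\coord{V}$. Applying this to $X = \overline{G\easy}$, let $I(\overline{G\easy})\subseteq\coord{V}$ denote its vanishing ideal, i.e., the set of all $f\in\coord{V}$ that are zero on every point of $\overline{G\easy}$. By the Zariski-closedness just cited, $\overline{G\easy}$ coincides with the common zero locus of $I(\overline{G\easy})$.

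Next I would argue by contraposition. Suppose no polynomial $f\in\coord{V}$ vanishes on $\overline{G\easy}$ without vanishing on $\hard$. Then every $f\in I(\overline{G\easy})$ satisfies $f(\hard)=0$, so $\hard$ lies in the common zero locus of $I(\overline{G\easy})$, which by the previous step equals $\overline{G\easy}$. This contradicts the hypothesis $\hard\notin\overline{G\easy}$, completing the proof.

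There is no real obstacle here: the content is entirely packaged in the algebraic geometry fact cited as \cite[AI.7.2]{Kra:85} that euclidean and Zariski closures of a constructible (in particular, orbit) set coincide. The role of Proposition~\ref{cor:polyobstrexist} is purely to make explicit the passage from a geometric separation statement ($\hard\notin\overline{G\easy}$) to the existence of a \emph{polynomial obstruction} $f$, which is the object that the remainder of the paper will construct via representation-theoretic means (highest weight vectors $f_\Hy$) in order to prove actual lower bounds on $\underline{R}(\MaMu_m)$.
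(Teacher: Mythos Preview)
Your proposal is correct and matches the paper's approach exactly: the paper does not even give a separate proof, stating only that the proposition follows immediately from the fact (cited from \cite[AI.7.2]{Kra:85}) that the orbit closure is Zariski closed. You have simply unpacked that one-line implication.
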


We call such polynomials~$f$ that separate $\hard$ from $\overline{G\easy}$ 
\emph{polynomial obstructions}. 
By Proposition~\ref{cor:polyobstrexist}, they are 
guaranteed to exist if $\hard \notin \overline{G\easy}$. 
We want to investigate 
whether there are ``short encodings'' of polynomial obstructions~$f$
and whether there are ``short proofs'' that $f$ is an obstruction. 
Representation theory provides a natural framework to address these questions.



\subsection{Highest Weight Vectors}
We recall some facts from representation theory~\cite{gw:09}.
Let $\sV$ be a rational $\GL_n$\dash representation.
For a given $\wght \in \IZ^n$, a \emph{weight vector $f\in\sV$ of weight \wght}
is defined by the following property:
$
 \diag(\alpha) f = \alpha_1^{\wght_1} \alpha_2^{\wght_2} \cdots \alpha_n^{\wght_n} f
$
for all $\alpha \in (\IC^\times)^n$,
where $\diag(\alpha)$ denotes the diagonal matrix with diagonal entries $\diag(\alpha)_{i,i} = \alpha_i$. 

 Let $U_n\subseteq \GL_n$ denote the group of upper triangular matrices with 1s on the main diagonal,
 the so-called \emph{maximal unipotent group}.
 A weight vector $f\in \sV$ that is fixed under the action of~$U_n$, i.e., $\forall u \in U_n: uf = f$,
 is called a \emph{highest weight vector (HWV) of $\sV$}.
 The vector space of HWVs of weight $\la$ is denoted by~$\HWV_\la(\sV)$.
The following is well known. 

\begin{lemma}
\label{lem:highestweightvector}
 Each irreducible rational $\GL_n$\dash representation $\sW$ contains, up to scalar multiples, 
exactly one nonzero HWV~$f$.
The representation $\sW$ is the linear span of the $\GL_n$\dash orbit of $f$.
Two irreducible representations are isomorphic iff the weights of their HWVs coincide.
\end{lemma}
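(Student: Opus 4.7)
The plan is to establish the three assertions of the lemma in turn. The overall strategy rests on the weight decomposition of rational $\GL_n$-representations, the triangular decomposition $\GL_n \supseteq U_n^- \cdot T_n \cdot U_n$ (where $T_n \subseteq \GL_n$ is the diagonal torus and $U_n^-$ the opposite lower unipotent group), and the partial order on weights defined by $\mu \preceq \la$ iff $\la - \mu$ is a nonnegative integer combination of the simple roots $e_i - e_{i+1}$. The guiding geometric fact is that root operators from $U_n$ (resp.\ $U_n^-$) act on weight vectors by adding (resp.\ subtracting) positive roots.

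First I would prove \emph{existence} of a nonzero HWV in any nonzero finite-dimensional rational $\GL_n$-representation $\sW$. The maximal unipotent group $U_n$ is unipotent, so by a standard argument (Engel's theorem, or equivalently the fact that $U_n$ admits only the trivial irreducible rational representation), the fixed subspace $\sW^{U_n}$ is nonzero. Since $T_n$ normalizes $U_n$, it preserves $\sW^{U_n}$, and being a diagonalizable group acting rationally it decomposes $\sW^{U_n}$ into weight spaces. Any nonzero weight vector $f \in \sW^{U_n}$ is then an HWV.

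Next I would establish assertion~2 and derive \emph{uniqueness up to scalars} for assertion~1. The span $\Span(\GL_n \cdot f)$ is a nonzero $\GL_n$-subrepresentation of the irreducible $\sW$, hence equals $\sW$. Using the Gauss factorization on the dense open Bruhat cell (equivalently, the PBW decomposition of the enveloping algebra), one refines this to $\sW = \Span(U_n^- \cdot f)$. Applying any nonidentity element of $U_n^-$ to $f$ produces a linear combination of weight vectors whose weights are strictly less than $\la$ in the partial order, so every weight $\mu$ of $\sW$ satisfies $\mu \preceq \la$, and the weight space $\sW_\la = \IC f$ is one-dimensional. If $f'$ is any other HWV of some weight $\mu$, the same argument applied to $f'$ shows $\mu$ is a maximal weight of $\sW$, forcing $\mu = \la$ and $f' \in \IC f$.

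Finally, for assertion~3: if $\sW \cong \sW'$, then any $\GL_n$-equivariant isomorphism sends HWV to HWV and preserves weights. Conversely, given HWVs $f \in \sW$ and $f' \in \sW'$ of the same weight $\la$, I would consider the subrepresentation $\sU := \Span\bigl(\GL_n \cdot (f,f')\bigr) \subseteq \sW \oplus \sW'$. Both projections $\sU \to \sW$ and $\sU \to \sW'$ are surjective, since their images are nonzero subrepresentations of the irreducibles $\sW$ and $\sW'$. The weight argument from the previous step gives $\sU_\la = \IC(f,f')$, which is one-dimensional, whereas $(\sW \oplus \sW')_\la$ is two-dimensional; this rules out $\sU = \sW \oplus \sW'$, so the kernel of $\sU \to \sW$, being a subrepresentation of the irreducible $\sW'$, must vanish. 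Hence both projections are isomorphisms and $\sW \cong \sW'$. The main technical obstacle is the weight-bound step: rigorously verifying that every weight of $\sW$ is $\preceq \la$ with $\sW_\la = \IC f$ one-dimensional is precisely the content of the theorem of the highest weight and requires the root-space analysis of how $U_n^-$ acts on $f$; once this is in hand, the remaining assertions follow rather formally.
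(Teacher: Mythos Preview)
Your sketch is correct and follows the standard textbook argument for the theorem of the highest weight. Note, however, that the paper does not actually supply a proof of this lemma: it is stated as ``well known'' with a reference to Goodman--Wallach~\cite{gw:09}, so there is no proof in the paper to compare against. Your outline---existence of a $U_n$-fixed weight vector via unipotence, the weight bound $\mu \preceq \la$ and one-dimensionality of $\sW_\la$ via the action of $U_n^-$ on the dense Bruhat cell, and the diagonal-in-the-direct-sum trick for the isomorphism classification---is precisely the argument one finds in such references, and each step is sound as sketched.
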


The weight $\la\in\IZ^n$ of a HWV is always nondecreasing. 
It describes the isomorphy type of $\sW$. 
The heighest weight of the dual $\sW^*$ of $\sW$ is given by 
$\la^*:=(-\la_n,\ldots,\la_1)$. 
We denote by $\Weyl\la$ the  irreducible $\GL_n$\dash representation 
with highest weight~$\la$, called \emph{Weyl-module}. 
It is a well known fact that every $\sV$ splits into a direct sum of irreducible 
$\GL_n$-representations.

What has been said for $\GL_n$ extends in a straightforward way to 
representations $\sV$ of the group $\GL_n\times\GL_n\times\GL_n$.
A weight vector $f\in\sV$ of weight $z\in\IZ^n\times\IZ^n\times\IZ^n$ 
satisfies
$$
\mbox{$(\diag(\alpha^{(1)}), \diag(\alpha^{(2)}),\diag(\alpha^{(3)}))f 
 = \prod_{k=1}^3 \prod_{i=1}^n (\alpha_i^{(k)})^{z^{(k)}_i} $} \!f
$$
for all $\alpha^{(k)} \in (\IC^\times)^n$. 
The type of irreducible $\GL_n\times\GL_n\times\GL_n$\dash representations 
is given by triples $\la=(\la^{(1)},\la^{(2)},\la^{(3)})$, 
where $\la^{(k)}$ is a highest weight for $\GL_n$.  
We also write $\la^* := ((\la^{(1)})^*,(\la^{(2)})^*,(\la^{(3)})^*)$. 

\subsection{HWV Obstructions}

We return to our two scenarios.
The action of the group $G$ on $V$ induces an action of $G$ on $\C[V]$ defined by 
$(gf)(x) := f(g^{-1}x)$ for $g\in G$, $f\in\C[V]$, $x\in V$. 
This action respects the degree~$d$ part $\sV=\C[V]_d$. 
Let $I(Gv)=I(\overline{Gv})$ denote the vanishing ideal of the orbit~$Gv$
and let $\coord{\overline{Gv}} := \coord{V}/I(\overline{Gv})$
denote the coordinate ring of $\ol{Gv}$.

The following result shows that when searching for polynomial obstructions, 
we can restrict ourselves to HWVs.

\begin{proposition}
\label{pro:obshwv}
Let $\hard \in \affine$.
If $\hard \notin \overline{G\easy}$, 
then there exists 
some HWV $f_\la \in \coord{V}$ of some weight~$\la$
such that $f_\la$ vanishes on $\overline{G\easy}$, but 
$f_\la(g\hard)\neq 0$ for some $g \in G$.
\end{proposition}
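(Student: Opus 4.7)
The plan is to refine the unstructured polynomial obstruction from Proposition~\ref{cor:polyobstrexist} to an HWV obstruction, using the $G$\dash invariance of both orbit closures together with complete reducibility of rational representations of the reductive group $G$ (either $\GL_{n^2}$ or $\GL_n^3$ in our two scenarios).

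First, I would pick a polynomial obstruction $f\in\coord{V}$ from Proposition~\ref{cor:polyobstrexist}. Because both $\overline{G\easy}$ and $\overline{G\hard}$ are conical (the scalar action is contained in $G$), their vanishing ideals are graded; decomposing $f=\sum_d f_d$ into homogeneous components, every $f_d$ still vanishes on $\overline{G\easy}$, while some $f_d$ does not vanish on $\hard$. After renaming, I may thus assume $f$ is homogeneous of some degree~$d$, and the entire search moves into the finite\dash dimensional $G$\dash subrepresentation $\sV:=\coord{V}_d$. Introduce the two $G$\dash stable subspaces
\[
I_d := I(\overline{G\easy})\cap \sV, \qquad J_d := I(\overline{G\hard})\cap \sV.
\]
The hypothesis then reads $I_d\not\subseteq J_d$, witnessed by $f$.

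Since $G$ is reductive, $\sV$ is completely reducible, so $I_d\cap J_d$ admits a nonzero $G$\dash stable complement $C\subseteq I_d$. Pick any irreducible $G$\dash subrepresentation $\sW\subseteq C$; by Lemma~\ref{lem:highestweightvector} (extended to $\GL_n^3$ as discussed in the paragraph following it) $\sW$ contains a nonzero HWV $f_\la$ of some weight~$\la$. Since $f_\la\in\sW\subseteq I_d$, it vanishes on $\overline{G\easy}$. Moreover, $\sW\cap J_d$ is a $G$\dash subrepresentation of the irreducible $\sW$, so it equals $0$ or $\sW$; the latter is excluded because $\sW\subseteq C$ and $C\cap(I_d\cap J_d)=0$. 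Therefore $f_\la\notin J_d=I(\overline{G\hard})\cap\sV$, so $f_\la$ does not vanish on $G\hard$, and some translate $g\hard$ satisfies $f_\la(g\hard)\neq 0$, as required.

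The argument is essentially formal; the one point requiring any attention is the appeal to complete reducibility, which is legitimate because $G$ is reductive and we operate on the finite\dash dimensional rational representation $\coord{V}_d$ one degree at a time. I do not foresee a genuine technical obstacle — the value of the proposition lies not in its difficulty but in the conceptual reduction of polynomial obstructions to HWV obstructions.
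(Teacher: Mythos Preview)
Your argument is correct and rests on the same ingredients as the paper's proof: the $G$\dash stability of the vanishing ideal $I(\overline{G\easy})$ and complete reducibility of the finite\dash dimensional rational $G$\dash representation $\coord{V}_d$. The packaging differs slightly. The paper decomposes the obstruction $f$ itself: it writes $f=\sum_{d,\la} f_{d,\la}$ with each $f_{d,\la}$ in the $\la$\dash isotypic component of $I(\overline{G\easy})_d$, then uses Lemma~\ref{lem:highestweightvector} to express each $f_{d,\la}$ as a finite sum of $G$\dash translates of HWVs in $I(\overline{G\easy})_d$; since $f(\hard)\neq 0$, one of these HWVs is nonzero at some point of $G\hard$. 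You instead argue structurally with the pair of $G$\dash stable subspaces $I_d,J_d$, take a $G$\dash complement $C$ of $I_d\cap J_d$ inside $I_d$, and extract an irreducible $\sW\subseteq C$ whose HWV does the job. The paper's route is marginally more constructive (the HWV is visibly a ``piece'' of the original $f$), while yours is a clean existence argument that avoids tracking the decomposition of $f$; mathematically the two are equivalent reformulations of the same complete\dash reducibility step.
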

\begin{proof}
 The fact $f(\overline{G\easy})=0$
means that $f$ is contained in the vanishing ideal $I(\overline{G\easy})$.
But $I(\overline{G\easy})$ is a graded $G$\dash representation.
Hence we can write
$
 f = \sum_{d,\la} f_{d,\la},
$
where $f_{d,\la} \in I(\overline{G\easy})_d$ are elements from the
isotypic component of type~$\la$ in
the homogeneous part $I(\overline{G\easy})_d$.
By Lemma~\ref{lem:highestweightvector},
it follows that we can write
$
 f_{d,\la} = \sum_i g_{d,\la,i} f_{d,\la,i},
$
where $g_{d,\la,i} \in G$ and $f_{d,\la,i}$ is a HWV
in $I(\overline{G\easy})_d$ of weight~$\la$.

Let $g \in G$ with $f(g\hard)\neq 0$.
Then $g_{d,\la,i} f_{d,\la,i}(g \hard) \neq 0$ for some $d,\la,i$.
This means $f_{d,\la,i}(g_{d,\la,i}^{-1} g \hard) \neq 0$,
which proves the proposition.
\end{proof}

We call such $f_\la$ a {\em HWV obstruction}
against $\hard \in \overline{G\easy}$.
We will show that some HWVs have a succinct encoding,
which is linear in their degree~$d$. 

\begin{problem}
Can the separation in Proposition~\ref{pro:obshwv} 
always be achieved by some HWV $f_\la$ of degree~$d$ 
polynomially bounded in~$n$?
(We can show an upper bound exponential in~$n$ 
using general results on quantifier elimination over $\C$.)
\end{problem}


An \emph{occurence obstruction} against $\hard \in \overline{G\easy}$,
as introduced by Mulmuley and Sohoni~\cite[Def.~1.2]{gct2}, 
is a highest weight~$\la$ for $G$ such that 
irreducible $G$-representations of type $\la$ do not occur in 
$\coord{\overline{G\easy}}$, 
but some irreducible $G$-representation of type~$\la$ 
does occur in $\coord{\overline{G\hard}}$. 
These properties can be rephrased as follows:
\begin{itemize}
\item {\em All} HWVs in $\C[V]$ of weight $\la$ 
vanish at $\ol{G\easy}$;
\item There exists some HWV~$f_\la$ in $\C[V]$ of
  weight $\la$ that does not vanish on $\ol{G\hard}$.
\end{itemize} 
If $\la$ is an occurence obstruction against
$\hard \in \overline{G\easy}$, then
there exists a HWV obstruction $f_\la$ of weight~$\la$.
But the converse is not true in general, see for instance  
the discussion on Strassen's invariant in~\cite{BI:10}.  
Clearly, if the irreducible represenation corresponding to $\la$ 
occurs in  $\C[V]$ with high multiplicity, then item one 
above is much harder to satisfy for occurence obstructions.

While Proposition~\ref{pro:obshwv} tells us that 
$\hard \not\in \overline{G\easy}$ can, in principle, 
always be proven by exhibiting a HWV obstruction,
it is unclear whether this is also the case for occurence obstructions. 
We state this as an important open problem. 

\begin{problem}
For the scenarios in Subsection~\ref{subsec:unifnot},
if $\hard_{m,n} \notin \overline{G\easy_n}$,
is there an occurence obstruction proving this?
\end{problem}

Mulmuley and Sohoni conjecture that~(\ref{MS-conj}) 
can be proved with occurence obstructions, see \cite[\S3]{gct2}.

\section{Main Results}\label{se:main} 

\subsection{Some Notation}

A \emph{partition} $\la$ is a finite sequence of nonincreasing natural numbers.
The number of nonzero elements in $\la$ is called its \emph{length}~$\ell(\la)$.
We call $|\la| \coloneqq \sum_i \la_i$ the \emph{size} of $\la$.
If $\la$ satisfies $|\la|=d$ and $\ell(d)\leq n$, then we write $\la \partinto[n] d$.
If we do not specify the size, we just write $\la \partinto[n]$,
and if we do not specify the length, we write $\la \partinto d$.

A pictorial description of partitions is given by \emph{Young diagrams},
which are upper-left-justified arrays having $\la_i$ boxes in the $i$th row.
The partitions $\recta \ell k := (k,k,\ldots,k)$ correspond to rectangular Young diagrams with $\ell$ rows and $k$ columns. 
When reflecting a Young diagram~$\la$ at the diagonal from the upper left to the lower right
we get a Young diagram again, which we call the \emph{transposed Young diagram}~$\transpose \la$.
Note that the number of boxes of $\la$ in column $i$ equals $\transpose\la_i \coloneqq (\transpose\la)_i$.

For a triple $\la = (\la^{(1)},\la^{(2)},\la^{(3)})$ of partitions, 
henceforth called {\em partition triple}, 
we use the short notation $\la \partintostar[n]d$ 
to express that $\la^{(k)}\partinto[n]d$ for all $k \in \{1,2,3\}$.

A \emph{set partition} $\setp$ of a set $S$ is a set of subsets of $S$ such that 
for all $s \in S$ there exists exactly one $e_s \in \setp$ with $s \in e_s$. 
If $\mu$ denotes the partition obtained from sorting the multiset $\{|e| : e \in \setp\}$, 
then we call the partition $\transpose\mu$ the \emph{type} of~$\setp$.
(The reason for taking the transpose will become clear soon.) 

\subsection{Obstruction Designs Encoding HWVs}


The following reasonings require some multilinear algebra.

Consider an ordered set $S=[d]:=\{1,2,\ldots,d\}$. 
We interpret a map $J\colon S\to\C^n$ as an $n\times d$-matrix whose columns are 
indexed by the elements of $S$. For a subset $e\subseteq S$ we denote by 
$\det J|_e$ the determinant of the submatrix of $J$ obtained 
by selecting the first $|e|$ rows and the columns indexed by the elements in~$e$. 
We define the evaluation of a partition $\setp$ of the set $S$ at $J$ by 
\begin{equation}\label{eq:def-eval}
 \eval_\setp(J) := \prod_{e\in\setp} \det J|_e .
\end{equation}
Note that $(\C^n)^d\to\C,\,J\mapsto \eval_\setp(J)$ is multilinear 
and therefore defines a linear form $\eval_\setp$ on $\tensor^d \C^n$. 

We define an {\em obstruction design} as 
a subset $\Hy\subseteq [\ell_1]\times [\ell_2]\times [\ell_3]$ of the discrete box 
of side lengths $\ell_1,\ell_2,\ell_3$, respectively. 
The {\em $1$-slices}  of $\Hy$ are defined as the sets 
$e^{(1)}_i:=\{ (i,j,k) \in \Hy\mid j \in [\ell_2], k\in [\ell_3]\}$ for $i\in [\ell_1]$. 
The set $E^{(1)}$ consisting of the $1$-slices of $\Hy$ defines 
a set partition of~$\Hy$ (after omitting possibly empty 1-slices). 
The first marginal distribution of $\Hy$ is the map 
$\mu^{(1)}\colon [\ell_1]\to \N, i \mapsto |e^{(1)}_i|$. 
Similarly, we define the set partition $E^{(2)}$ of $2$-slices of~$\Hy$ 
with its  marginal distribution $\mu^{(2)}$ and 
the set partition $E^{(3)}$ of $3$-slices of~$\Hy$
with its  marginal distribution $\mu^{(3)}$. 
Note that $|e^{(1)}\cap e^{(2)} \cap e^{(3)}| \le 1$ for all 
$(e^{(1)},e^{(2)},e^{(3)}) \in E^{(1)}\times  E^{(2)}\times  E^{(3)}$. 
By a permutation of the sides we may always assume that 
the marginal distributions $\mu^{(k)}$ are monotonically decreasing, i.e., 
partitions of $d:=|\Hy|$. 
Then $\la^{(k)} := \transpose\mu^{(k)}$  is the type of the set partition $E^{(k)}$.  
We call the {\em partition triple} 
$\la=(\la^{(1)},\la^{(2)},\la^{(3)})$ the {\em type} of $\Hy$. 
If all slices contain at most $n$ elements, 
we have 
$\la \partintostar[n]d$.  

By a {\em triple labeling} of $\Hy$ we shall understand 
a map $J\colon\Hy\to (\C^n)^3$. 
After fixing an ordering of $\Hy\simeq [d]$, 
we can define the 
{\em evaluation of the obstruction design $\Hy$ at the triple labeling~$J$} 
by
$$
 \eval_\Hy(J) := \prod_{k=1}^3 \eval_{E^{(k)}} (J^{(k)}) ,
$$
where $J^{(k)}\colon\Hy\to\C^n$ denote the 
components of $J$.
Note that 
$(\C^n)^d\times (\C^n)^d\times (\C^n)^d \to \C,\, J\mapsto \eval_\Hy(J)$
is multilinear. Hence it defines a linear form 
$\tensor^d \tensor^3\C^n \to \C$ 
that we denote by the same symbol: 
$\eval_\Hy( \tensor_{s=1}^d \tensor_{k=1}^3 J^{(k)}(s)) := \eval_\Hy(J)$. 
We symmetrize the linear form $\eval_\Hy$ with respect to the permutations in $S_d$
\begin{eqnarray*}
\lefteqn{\symeval\big(\tensor_{s=1}^d \tensor_{k=1}^3 J^{(k)}(s)\big)} \\ 
  &:=& 
 \frac{1}{d!} \sum_{\pi\in\aS_d} \eval_\Hy \big(\tensor_{s=1}^d \tensor_{k=1}^3 J^{(k)}(\pi(s))\big)
\end{eqnarray*}
obtaining a symmetric multilinear form 
on $\tensor^d \tensor^3\C^n$. 
Then $f_\Hy(w) := \symeval_\Hy(w^{\ot d}) = \eval_\Hy(w^{\ot d})$ defines 
a homogenous polynomial $f_\Hy$ of degree~$d$ on $\tensor^3 \C^n$ 
(restitution and polarization, cf.\ in \cite[Ch.~1.2]{Dol:03}).

More specifically, the polynomial~$f_\Hy$ can be described as follows. 
Suppose that the tensor $w$ is decomposed into distinct rank 1 tensors as 
$w=\sum_{i=1}^r w^{(1)}_i \ot  w^{(2)}_i \ot  w^{(3)}_i$.
We have 
$$
 w^{\ot d} = \sum_{I\colon [d]\to [r]} \tensor_{s=1}^d \tensor_{k=1}^3 w_{I(s)}^{(k)}
$$
Consider the set 
$\Tset:= \{ (w^{(1)}_i, w^{(2)}_i ,w^{(3)}_i) \mid 1 \le i \le r\}$ 
of triples of vectors.
The maps $I\colon [d]\to [r]$ correspond bijectively to the 
triple labelings $J\colon\Hy\to\Tset$ defined by $J^{(k)}(s) := w_{I(s)}^{(k)}$.
Therefore, 
$$
 \eval_\Hy\big(w^{\otimes d}\big) = 
 \sum_{J\colon \Hy\to \Tset} \eval_\Hy\big(\tensor_{s=1}^d \tensor_{k=1}^3 J^{(k)}(s) \big).
$$ 
This implies 
\begin{equation}\label{def:fR}
 f_\Hy(w) = \sum_{J\colon \Hy\to \Tset} \eval_\Hy(J) . 
\end{equation}

By symmetry, $f_\Hy(w)$ does not depend on the chosen ordering of~$\Hy$. 

\begin{theorem}\label{th:OD}
Let $\Hy$ be an obstruction design of type $\la \partintostar[n]d$. 
Then $f_\Hy$ is a highest weight vector of weight $\la^*$ in $\Sym^d\tensor^3(\IC^*)^n$. 
Moreover, if $\Hy$ runs over all obstructions designs of type $\la$, 
the $f_\Hy$ span the space of highest weight vectors $\HWV_{\la^*}(\Sym^d\tensor^3(\IC^*)^n)$. 
\end{theorem}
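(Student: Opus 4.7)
The plan is to verify the HWV claim for individual $f_\Hy$ by direct computation on the determinant factors, and then prove spanning by Schur--Weyl duality combined with a polytabloid argument and a short vanishing argument for non-obstruction triples.

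First I will check the weight and unipotent conditions. Under a diagonal element $\diag(\alpha^{(k)})$ acting on the $k$-th factor, the $i$-th row of $J^{(k)}$ scales by $\alpha^{(k)}_i$; since $\det J^{(k)}|_e$ only uses rows $1,\ldots,|e|$, it picks up a factor $\prod_{i=1}^{|e|}\alpha^{(k)}_i$, and multiplying over $e\in E^{(k)}$ yields $\prod_i(\alpha^{(k)}_i)^{\la^{(k)}_i}$. Dualizing to the action on $\coord{V}$ gives weight $\la^*$. For unipotent invariance, the restriction of a $U_n$-element $u$ to the relevant first-$|e|$-rows submatrix of $J^{(k)}|_e$ is left multiplication by the $|e|\times|e|$ upper-left block of $u$, which is again unipotent and thus has determinant $1$; hence each $\det J^{(k)}|_e$ is fixed, and so are $\eval_\Hy$ and its restitution $f_\Hy$.

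For spanning, I will use Schur--Weyl duality to identify
\[
\HWV_{\la^*}\bigl(\Sym^d\tensor^3(\IC^n)^*\bigr)\;\cong\;\bigl([\la^{(1)}]\otimes[\la^{(2)}]\otimes[\la^{(3)}]\bigr)^{S_d}
\]
via the factor-by-factor isomorphism $\HWV_{(\la^{(k)})^*}(\tensor^d(\IC^n)^*)\cong[\la^{(k)}]$ as $S_d$-representations. Under this identification, the form $\eval_{E^{(k)}}$ attached to a set partition $E^{(k)}$ of $[d]$ of type $\la^{(k)}$ corresponds (up to an ordering-dependent sign) to the polytabloid obtained by column-antisymmetrizing the tabloid whose $j$-th column lists the elements of the $j$-th block of $E^{(k)}$. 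Classically, such polytabloids span the Specht module $[\la^{(k)}]$ as $E^{(k)}$ varies; tensoring over $k$ and then $S_d$-symmetrizing (which is exactly the restitution $f_\Hy=\symeval_\Hy(w^{\otimes d})$) produces a spanning set of the Kronecker invariant space indexed by arbitrary triples of set partitions.

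To reduce to obstruction designs, I will invoke the following vanishing observation. If a triple $(E^{(1)},E^{(2)},E^{(3)})$ violates the obstruction-design condition, there exist distinct $s,s'\in[d]$ lying in a common block $e^{(k)}$ for every $k$, and the transposition $(s\,s')\in S_d$ fixes $w^{\otimes d}$ while swapping two columns in each of the three determinants $\det J^{(k)}|_{e^{(k)}}$, producing a net sign $(-1)^3=-1$. Hence $\eval_\Hy(w^{\otimes d})=-\eval_\Hy(w^{\otimes d})=0$, i.e., $f_\Hy=0$. Thus only obstruction designs contribute, and they already span. The main technical obstacle will be making the Schur--Weyl identification fully explicit: in particular, verifying that $\eval_E$ really realizes the polytabloid attached to $E$ up to a canonical nonzero scalar, and that the $S_d$-action used in the restitution intertwines correctly with the natural $S_d$-action on the Specht factors $[\la^{(k)}]$. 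Once this bookkeeping is complete, the theorem follows by combining the classical spanning of Specht modules by polytabloids with the sign-cancellation argument above.
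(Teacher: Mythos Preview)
Your proposal is correct and matches the paper's proof in Section~\ref{se:explicit-HWVs}: the same Schur--Weyl identification (Proposition~\ref{pro:generatePoly}), the same tensor-and-symmetrize spanning argument, and the same $(-1)^3$ transposition argument for vanishing on non-obstruction triples (Lemma~\ref{lem:zeropattern}). The one shortcut the paper takes is that it bypasses the explicit polytabloid matching you flag as the main technical obstacle: since $\HWV_{\la^*}(\tensor^d(\C^n)^*)\cong[\la]^*$ is an \emph{irreducible} $\aS_d$-module, the $\aS_d$-orbit of a single nonzero $\eval_{\setp_\la}$ already spans it, so no bookkeeping between $\eval_E$ and polytabloids is needed.
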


The proof will be given in Section~\ref{se:explicit-HWVs}. 

\subsection{Chromatic Index of Obstruction Designs}

We describe here a simple combinatorial condition for $f_\Hy$ 
vanishing on all tensors of border rank at most~$r$. 
Let us stress that this condition is sufficient, 
but far from being necessary. 

By a {\em proper coloring} of an obstruction design~$\Hy$ 
with $c$ colors we shall understand a map 
$\sigma\colon\Hy \to [c]$ such that in each slice of~$\Hy$, 
the colors of points are pairwise different. 
The {\em chromatic index} $\chi'(\Hy)$ is defined as the 
least number of colors sufficient for coloring~$\Hy$. 

\begin{proposition}\label{pro:chrom-vanish}
Let $\Hy$ be an obstruction design of type $\la \partintostar[n]d$.
Then we have $f_\Hy(w)=0$ for all tensors $w\in\tensor^3\C^n$ satisfying 
$\br(w)<\chi'(\Hy)$. 
\end{proposition}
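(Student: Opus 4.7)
The plan is to first reduce from border rank to ordinary rank via the continuity of~$f_\Hy$, and then use the explicit sum formula~(\ref{def:fR}) after identifying the indexing triple labelings $J\colon\Hy\to\Tset$ with colorings of~$\Hy$.

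Since $f_\Hy$ is a polynomial on $\tensor^3\C^n$, its zero set is closed in the euclidean topology. Every tensor of border rank at most~$r$ is a limit of tensors of rank at most~$r$, so it suffices to prove $f_\Hy(w)=0$ whenever $R(w)=r<\chi'(\Hy)$. Fix such a~$w$ and write $w=\sum_{i=1}^r w^{(1)}_i\ot w^{(2)}_i\ot w^{(3)}_i$. Formula~(\ref{def:fR}) gives
$$f_\Hy(w)=\sum_{J\colon\Hy\to\Tset}\eval_\Hy(J),$$
with $\Tset=\{(w^{(1)}_i,w^{(2)}_i,w^{(3)}_i)\mid 1\le i\le r\}$. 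Each labeling~$J$ arises from a unique map $\sigma\colon\Hy\to[r]$ through $J^{(k)}(s)=w^{(k)}_{\sigma(s)}$; such a $\sigma$ is precisely a coloring of the points of~$\Hy$ with at most $r$ colors.

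The main step is to show that each summand $\eval_\Hy(J)=\prod_{k=1}^{3}\prod_{e\in E^{(k)}}\det J^{(k)}|_e$ vanishes. Because $r<\chi'(\Hy)$, the coloring~$\sigma$ cannot be proper, so there exist a direction $k\in\{1,2,3\}$ and a $k$-slice $e\in E^{(k)}$ containing two distinct points $s_1\ne s_2$ with $\sigma(s_1)=\sigma(s_2)$. The columns of the matrix $J^{(k)}|_e$ indexed by $s_1$ and $s_2$ are then the same vector $w^{(k)}_{\sigma(s_1)}\in\C^n$, so this square submatrix has two equal columns, $\det J^{(k)}|_e=0$, and hence $\eval_\Hy(J)=0$. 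Summing over all~$J$ yields $f_\Hy(w)=0$.

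There is no genuine obstacle in this argument; it is essentially a one-step combinatorial translation of ``improper coloring'' into ``repeated columns in a determinant.'' The only points requiring minor care are the continuity argument passing from rank to border rank and the identification of the triple labelings $J$ indexing~(\ref{def:fR}) with $r$-colorings of the vertex set of~$\Hy$.
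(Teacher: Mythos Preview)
Your proof is correct and follows essentially the same approach as the paper: interpret the triple labelings in formula~(\ref{def:fR}) as colorings with $r$ colors, observe that $r<\chi'(\Hy)$ forces a repeated color in some slice and hence a zero determinant factor, and pass from rank to border rank by continuity. The only cosmetic difference is that you place the continuity reduction at the beginning while the paper applies it at the end.
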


\begin{proof}
Suppose that 
$w=\sum_{i=1}^r w^{(1)}_i \ot  w^{(2)}_i \ot  w^{(3)}_i$
and interpret $\Tset$ (defined right before~\eqref{def:fR}) as a set of colors. 
If $r = |\Tset|<\chi'(\Hy)$, then a map $J\colon\Hy\to\Tset$ 
cannot be a proper coloring of~$\Hy$. 
Hence there exists some~$k$ and some slice $e\in E^{(k)}$ 
in which two points get the same color. 
As a consequence, the matrix $J^{(k)}|e$ has a duplicated column 
and hence $\det J^{(k)}|e = 0$. Therefore, 
Equation~\eqref{def:fR} implies that $f_\Hy(w)=0$. 
By continuity it follows that $f_\Hy(v)=0$ for all $v\in\tensor^3\C^n$ 
with $\br(v)\le r$. 
\end{proof}


It is therefore desirable to find obstruction designs with large chromatic index. 
There is a limit though.

\begin{lemma}
We have $\chi'(\Hy) \le 3n -2$ for any obstruction design of type $\la \partintostar[n]d$.
\end{lemma}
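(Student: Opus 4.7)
The plan is to interpret a proper coloring of $\Hy$ as a proper vertex coloring of an auxiliary conflict graph and then apply the trivial greedy bound $\chi(G)\le\Delta(G)+1$.

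First, I would define the \emph{conflict graph} $\Gamma$ whose vertex set is $\Hy$ and in which two distinct points are joined by an edge iff they lie in a common slice of $\Hy$. Unpacking the definitions, two distinct points $(i,j,k),(i',j',k')\in\Hy$ are joined in $\Gamma$ exactly when they agree in at least one coordinate. A map $\sigma\colon\Hy\to[c]$ is a proper coloring of $\Hy$ in the sense of the paper iff it is a proper vertex coloring of $\Gamma$, so $\chi'(\Hy)=\chi(\Gamma)$.

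Next, I would bound the maximum degree of $\Gamma$. Fix $p=(i,j,k)\in\Hy$. The neighbors of $p$ in $\Gamma$ are contained in $\bigl(e^{(1)}_i\cup e^{(2)}_j\cup e^{(3)}_k\bigr)\setminus\{p\}$. Since $p$ belongs to each of the three slices and each slice contains at most $n$ elements (this is precisely the hypothesis $\la\partintostar[n]d$, i.e.\ $\ell(\la^{(k)})\le n$), a direct count that removes the double overcount of $p$ gives
\[
  \bigl|e^{(1)}_i\cup e^{(2)}_j\cup e^{(3)}_k\bigr|\;\le\;n+n+n-2\;=\;3n-2,
\]
and therefore $p$ has at most $3n-3$ neighbors in $\Gamma$. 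Hence $\Delta(\Gamma)\le 3n-3$.

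Finally, the greedy bound $\chi(\Gamma)\le\Delta(\Gamma)+1$ yields
\[
  \chi'(\Hy)\;=\;\chi(\Gamma)\;\le\;3n-3+1\;=\;3n-2.
\]
There is no real obstacle: the argument is purely combinatorial and elementary. The only mild subtlety is the inclusion-exclusion bookkeeping at $p$, which one could also phrase as: greedily order the points of $\Hy$ and when coloring $p$ observe that at most $3(n-1)$ previously colored points conflict with it, leaving a free color in $[3n-2]$.
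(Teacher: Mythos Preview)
Your proof is correct and follows essentially the same approach as the paper: build the conflict graph on $\Hy$ whose edges record the ``same slice'' relation, observe that the maximum degree is at most $3(n-1)$ because each of the three slices through a point contributes at most $n-1$ other points, and apply the greedy bound $\chi\le\Delta+1$. The only cosmetic difference is that the paper states the degree bound directly as $3(n-1)$ rather than via your inclusion--exclusion on the union of the three slices.
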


\begin{proof}
$\chi'(\Hy)$ equals the chromatic number of the graph~$G$ 
with vertex set $\Hy$, in which two nodes are connected iff they lie in a same slice. 
Each node in this graph has degree at most 
$\Delta = 3(n-1)$, since there are at most $n$ nodes in each slice.
It is well known from graph theory that $1+\Delta$ is an upper bound 
on the chromatic number of $G$. 
\end{proof}

This result shows that $3n-2$ is the best lower bound on border rank that can be shown 
based on Proposition~\ref{pro:chrom-vanish}. 
Unfortunately, the limit seems even smaller. 

An obstruction design $\Hy\subseteq [\ell_1]\times [\ell_2]\times [\ell_3]$ 
can be interpreted as a 3-partite, 3-uniform, 2-simple hypergraph: its set of nodes 
is the disjoint union $[\ell_1]\dot{\cup} [\ell_2]\dot{\cup} [\ell_3]$ and each $(i,j,k)\in\Hy$ 
defines a hyperedge $\{i,j,k\}$. With this view in mind, $\chi'(\Hy)$ turns 
out to be the chromatic index of this hypergraph: indeed we want to color 
the hyperedges in such a way that incident hyperedges get different colors. 
The degree of this hypergraph is the maximum cardinality of slices of $\Hy$, 
which is bounded by~$n$. 
A conjecture due to Alon and Kim~\cite{AK:97} implies that for all $\epsilon >0$,  
there is $n_0$ such that for all $n\ge n_0$ we have 
$\chi'(\Hy) \le (\frac32 + \epsilon) n$ for all $\Hy$ of type $\la \partintostar[n]d$.
Hence, if this conjecture is true,
$\frac32 n +o(n)$ is the best possible lower bound 
on border rank that can be shown based on Proposition~\ref{pro:chrom-vanish}. 

We next show that we can achieve this lower bound for the matrix multiplication tensor. 


\subsection{Lower Bound for Matrix Multiplication}\label{ssec:lbMaMu}

Consider the obstruction design 
$\Hy_\kappa := \{(i,j,k)\in [\kappa+1]^3 \mid i=1 \mbox{ or } j=1 \mbox{ or } k=1 \}$
given by a ``3-dimensional hook'' ($\kappa\in\N$).  
Its type $\la(\kappa)$ is the triple with components three times the hook partition 
$(\kappa+1,1,\ldots,1)\partinto[2\kappa+1] 3\kappa+1$.  
It is obvious that $\chi'(\Hy_\kappa) = 3\kappa+1$. 

In Section~\ref{se:NVatMaMu} we shall prove the following technical result.

\begin{lemma}\label{le:NEU}
There exists a matrix triple $A \in (\GL_{m^2})^3$ such that 
$f_{\Hy_\kappa}(A\MaMu_m) \neq 0$,
where $\kappa := \frac{m^2-1}{2}$ for $m > 1$ odd.
\end{lemma}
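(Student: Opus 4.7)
My plan is to unwind formula~\eqref{def:fR} and exhibit an explicit $A$ for which the resulting sum is nonzero. Writing the rank-1 summands of $\MaMu_m$ as $(e_{ij}, e_{jl}, e_{li})$ with $(i,j,l)\in[m]^3$, where $\{e_{\alpha\beta}\}$ is the standard basis of $\IC^{m^2}\cong\IC^m\otimes\IC^m$, a labeling $J\colon \Hy_\kappa\to[m]^3$ and a matrix triple $A=(A^{(1)},A^{(2)},A^{(3)})\in(\GL_{m^2})^3$ yield
\[
 \eval_{\Hy_\kappa}(J) = \prod_{k=1}^3\prod_{e\in E^{(k)}} \det\bigl(J^{(k)}|_e\bigr), \qquad J^{(k)}(p) = A^{(k)}e^{(k)}_{J(p)},
\]
with $e^{(1)}_{(i,j,l)}:=e_{ij}$, $e^{(2)}_{(i,j,l)}:=e_{jl}$, $e^{(3)}_{(i,j,l)}:=e_{li}$. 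The key numerical coincidence $2\kappa+1 = m^2$ means that each of the three big slices $e^{(k)}_1$ has size exactly $m^2$, so the corresponding determinant is a full $m^2\times m^2$ one, factoring as $\det A^{(k)}\cdot \sigma^{(k)}(J)$, where $\sigma^{(k)}(J)\in\{\pm 1,0\}$ is nonzero exactly when the pair-labels on $e^{(k)}_1$ form a bijection with $[m]^2$. Each of the $\kappa$ singleton slices in direction $k$ contributes a top-row entry $(A^{(k)})_{1,*}$ of the relevant column.

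After pulling out the invertible factor $\prod_k \det A^{(k)}$, the sum reduces to one indexed by valid labelings, i.e.\ those $J$ for which each big slice induces a bijection with $[m]^2$ under the relevant coordinate projection:
\[
 \sum_{J\text{ valid}}\varepsilon(J)\prod_{p\in R_1\setminus\{p_0\}}(A^{(1)})_{1,(i_p,j_p)}\prod_{p\in R_2\setminus\{p_0\}}(A^{(2)})_{1,(j_p,l_p)}\prod_{p\in R_3\setminus\{p_0\}}(A^{(3)})_{1,(l_p,i_p)},
\]
where $R_1,R_2,R_3$ are the three arms of the hook meeting at the tip $p_0=(1,1,1)$, and $\varepsilon(J)=\prod_k\sigma^{(k)}(J)\in\{\pm 1\}$. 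I would then treat the first-row entries of the $A^{(k)}$ as formal variables and prove non-vanishing by identifying a monomial produced by exactly one valid labeling $J^\star$. A natural target: assign the singleton-slice pair-labels along each arm to a prescribed sequence of distinct pairs in $[m]^2$ disjoint from the tip's label, with the remaining coordinates forced by the two bijection constraints threading each arm.

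The main obstacle is ruling out cancellation: the three bijection constraints intertwine the labels across arms, so a naively chosen target monomial may be produced by several valid labelings with opposing signs. My strategy is an extremal-monomial argument: treat the first-row entries of the $A^{(k)}$ as algebraically independent indeterminates, impose a lexicographic term order on monomials aligned with the hook geometry, and show that the leading term under this order is contributed by a single labeling $J^\star$. The construction of $J^\star$ uses the tip's label to break ties in the bijection constraints; assigning singleton pair-labels along each arm in a strictly increasing sequence ensures no other valid labeling can produce the same leading monomial. Verifying this uniqueness reduces to a finite combinatorial check on bijections $e^{(k)}_1\leftrightarrow [m]^2$, yielding a nonzero leading coefficient and completing the proof.
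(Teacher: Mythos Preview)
Your setup is essentially the same as the paper's: expand $f_{\Hy_\kappa}(A\MaMu_m)$ via \eqref{def:fR}, note that each big slice has size $2\kappa+1=m^2$ so its determinant is $\pm\det A^{(k)}$ when the relevant pair-labels form a bijection and zero otherwise, and then treat certain entries of $A$ as indeterminates and look for a monomial with nonzero coefficient. This is exactly the skeleton of Section~\ref{se:NVatMaMu}.

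The genuine gap is that you stop precisely where the actual work begins. You do not specify the lexicographic order, you do not construct $J^\star$, and ``reduces to a finite combinatorial check'' is not an argument: the statement must hold for all odd $m>1$, so a uniform construction and verification is required, not a check for each fixed $m$. More importantly, the paper's proof indicates that your expectation of a \emph{unique} contributor is likely wrong. With a carefully engineered matrix triple (only $3m$ indeterminates, structured as in \eqref{eq:matrixtriple}) and a carefully chosen target monomial $\omegaXM$, the paper shows that the contributing labelings are in bijection with certain ``valid sets'' $S\subseteq [m]^2\setminus\{(a,a)\}$ satisfying a $90^\circ$-rotation-inversion symmetry $\tau(S)=\iota(S)$ together with row-count constraints; there are exactly $2^{(m-1)/2}$ of them. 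The heart of the proof is then a sign analysis (Claim~6.9) showing that all of these contribute with the \emph{same} sign, so the coefficient is $2^{(m-1)/2}$ times a nonzero constant. The three bijection constraints thread the arms together via the maps $\tau$ and $\iota$ (Claim~\ref{cla:tauiota}), and this coupling is what produces multiple contributors; there is no indication that a generic lex-order argument would collapse them to one.

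A smaller issue: once you declare the first-row entries of $A^{(k)}$ to be indeterminates, $\det A^{(k)}$ is itself a polynomial in those indeterminates, so ``pulling out the invertible factor $\prod_k\det A^{(k)}$'' needs care. This is easily patched (fix rows $2,\ldots,m^2$ to be standard basis vectors so that $\det A^{(k)}$ is a single variable), but as written it is imprecise. In any case, the real obstacle is the missing combinatorial classification and sign analysis.
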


Combined with Proposition~\ref{pro:chrom-vanish}, this implies the following result 
if $m$ is odd. 
(The proof where $m$ is even is omitted.)

\begin{theorem}\label{thm:hooklowerbound}
We have $\br(\MaMu_m) \ge \frac{3}{2} m^2-\frac{1}{2}$ 
if $m$ is odd. Moreover,  $\br(\MaMu_m) \ge \frac{3}{2} m^2-2$ 
if $m$ is even. 
\end{theorem}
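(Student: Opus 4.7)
The plan is to derive the theorem directly from Lemma~\ref{le:NEU} and Proposition~\ref{pro:chrom-vanish}. Two auxiliary observations are needed: (i) the hook $\Hy_\kappa$ has chromatic index exactly $3\kappa+1$, which is immediate because any two axis\dash points of $\Hy_\kappa$ share a common coordinate plane, making the slice\dash graph the complete graph on its $3\kappa+1$ vertices; and (ii) border rank is invariant under the $\GL_{m^2}^3$\dash action on $\tensor^3\IC^{m^2}$, so that $\br(A\MaMu_m)=\br(\MaMu_m)$ for every $A\in(\GL_{m^2})^3$.

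For odd $m>1$, take $\kappa:=(m^2-1)/2$, so that $\chi'(\Hy_\kappa)=(3m^2-1)/2$. Lemma~\ref{le:NEU} supplies $A$ with $f_{\Hy_\kappa}(A\MaMu_m)\neq 0$, and the contrapositive of Proposition~\ref{pro:chrom-vanish}, combined with~(ii), gives $\br(\MaMu_m)\geq (3m^2-1)/2=\tfrac32m^2-\tfrac12$.

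For even $m$, the matching choice is $\kappa:=m^2/2-1$, giving $\chi'(\Hy_\kappa)=\tfrac32m^2-2$. What is required is an even\dash parity analog of Lemma~\ref{le:NEU}, namely an $A\in(\GL_{m^2})^3$ with $f_{\Hy_\kappa}(A\MaMu_m)\neq 0$ for this slightly smaller hook; given such an $A$, the argument above transfers verbatim. Since the shrunk hook imposes fewer combinatorial constraints, the natural route is to adapt the construction of Section~\ref{se:NVatMaMu} with the same family of matrix triples (resized to $m^2$), tracking only the bookkeeping changes arising from the parity of~$m$. One must resist the temptation to reduce the even case to the odd case via the restriction $\br(\MaMu_m)\geq\br(\MaMu_{m-1})$: this yields only $\tfrac32m^2-3m+\tfrac12$, which is too weak.

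The substantive content is therefore entirely in Lemma~\ref{le:NEU}. The main obstacle is controlling sign cancellation in the expansion~\eqref{def:fR}: $f_{\Hy_\kappa}(A\MaMu_m)$ is a sum over triple labelings $J\colon\Hy_\kappa\to\Tset$, where $\Tset$ collects the $m^3$ rank\dash one summands of $A\MaMu_m$, weighted by $\prod_{k,e}\det J^{(k)}|_e$, and $A$ must be chosen so that this alternating sum does not vanish. A natural strategy is to take $A$ diagonal (or close to diagonal) in the basis in which $\MaMu_m$ is written in~\eqref{eq:defmamu}, so that only labelings inducing proper colorings of $\Hy_\kappa$ contribute, thereby reducing the task to showing that a weighted combinatorial count of proper colorings of the 3D hook by triples $(i,j,l)$ is nonzero.
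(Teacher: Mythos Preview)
Your derivation of the odd case from Lemma~\ref{le:NEU} and Proposition~\ref{pro:chrom-vanish}, together with the two auxiliary observations, is correct and is exactly the paper's argument (which the paper states in a single sentence); your choice $\kappa=m^2/2-1$ for the even case and the remark that the restriction $\br(\MaMu_m)\ge\br(\MaMu_{m-1})$ is too weak are also correct, and the paper likewise omits the even-case analog of Lemma~\ref{le:NEU}.

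One remark on your closing paragraph, which goes beyond Theorem~\ref{thm:hooklowerbound} into the proof of Lemma~\ref{le:NEU}: the suggestion to take $A$ diagonal (or nearly so) is not the route the paper takes and is unlikely to work. The actual construction in Section~\ref{se:NVatMaMu} uses matrices $A^{(k)}$ with affine-linear entries in auxiliary indeterminates $X_i^{(k)}$ --- essentially an identity block augmented by a single row of variables --- and then isolates the coefficient of a specific monomial~$\omegaXM$. The whole point of this device is to cut the sum~\eqref{def:fR} down to a tightly constrained family of labelings (the ``valid sets'' of Definition~\ref{def:validset}) whose contributions can be shown to carry the \emph{same} sign; a diagonal $A$ would leave too much symmetry intact and the cancellation problem unresolved. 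This does not affect your argument for the theorem, which treats Lemma~\ref{le:NEU} as a black box, but the ``natural strategy'' you sketch is not the one that succeeds.
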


\begin{remark}
The same proof gives the same lower bound on the \emph{s\dash rank}~\cite{CU:12} 
of the matrix multiplication tensor.
\end{remark}

\begin{remark}
One can prove that $\Hy_\kappa$ is the only obstruction design of type $\la(\kappa)$. 
Put $n:=3\kappa$ and $d:=3\kappa +1$. 
Proposition~\ref{pro:chrom-vanish} implies that 
$f_{\Hy_\kappa}$ vanishes on $\overline{\GL_n^3 \Er n}$. 
Therefore, $\mult_{\la(\kappa)}(\coord{\overline{\GL_n^3 \Er n}}_d) =0$.   
Hence $\la(\kappa)$ is an occurence obstruction against  
$\MaMu_m \in \overline{\GL_n^3 \Er n}$. 
Based on the results in ~\cite{BI:10} 
we can prove the stronger statement 
$\mult_{\la(\kappa)}(\coord{\GL_n^3 \Er n}_d)=0$, 
cf.~\cite[Prop.~8.3.1]{ike:12b}. 
\end{remark}

\subsection{Comments, Examples, Open Questions}

Permutations in $S_{\ell_1}\times S_{\ell_2} \times S_{\ell_3}$ naturally act on 
the discrete cube $[\ell_1]\times [\ell_2]\times [\ell_3]$.
We call two obstruction designs $\Hy_1,\Hy_2\subseteq [\ell_1]\times [\ell_2]\times [\ell_3]$
{\em equivalent} if $\Hy_2$ arises from $\Hy_1$ by applying such a permutation. 
This amounts to permuting slices. 
Note that if $\Hy_1$ and $\Hy_2$ have the same type, 
then we can only permute slices having the same cardinality. 
Let $N(\la)$ denote the number of equivalence classes of obstructions designs of type~$\la$. 
It is clear that $f_{\Hy_1}=f_{\Hy_2}$ if $\Hy_1$ and $\Hy_2$ are equivalent. 

The {\em Kronecker coefficient} $\singlekron \la$ of $\la\partintostar[n]d$ 
can be characterized as the dimension of 
$\HWV_{\la^*}(\Sym^d\tensor^3(\IC^*)^n)$, 
cf.~\cite{BI:10}.  Theorem~\ref{th:OD} therefore implies the following upper bound
on Kronecker coefficients, which appears to be new 
(this is related, but different from~\cite{TCV:98}). 

\begin{corollary}\label{cor:kronest}
We have $\singlekron\la \leq N(\la)$ for $\la\partintostar[n]d$. 
\end{corollary}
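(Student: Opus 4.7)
The plan is to combine Theorem~\ref{th:OD} with the equivalence observation explicitly noted just above the statement. By the characterization recalled right before the corollary, $\singlekron\la = \dim \HWV_{\la^*}(\Sym^d\tensor^3(\IC^*)^n)$, so it suffices to upper bound this dimension by $N(\la)$.

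First I would invoke Theorem~\ref{th:OD} to get that the family $\{f_\Hy \mid \Hy \text{ obstruction design of type } \la\}$ is a spanning set for $\HWV_{\la^*}(\Sym^d\tensor^3(\IC^*)^n)$. Next, I would use the remark preceding the corollary: whenever $\Hy_1$ and $\Hy_2$ are equivalent (related by a permutation $\sigma \in S_{\ell_1}\times S_{\ell_2}\times S_{\ell_3}$ acting on the discrete cube), one has $f_{\Hy_1} = f_{\Hy_2}$. Picking one representative $\Hy$ from each equivalence class therefore produces a spanning set of cardinality exactly $N(\la)$, whence the dimension is bounded by $N(\la)$. Combining these two facts yields $\singlekron\la \le N(\la)$.

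The only step that requires a brief justification is the equality $f_{\Hy_1} = f_{\Hy_2}$ for equivalent designs. This is essentially bookkeeping: the polynomial $f_\Hy$ is built from the three set partitions $E^{(1)}, E^{(2)}, E^{(3)}$ of $\Hy$ via $\eval_\Hy(J) = \prod_{k=1}^3 \prod_{e \in E^{(k)}} \det J^{(k)}|_e$, followed by symmetrization over $S_d$. A permutation $(\sigma_1,\sigma_2,\sigma_3)$ only relabels the indices of the slices in each direction; the resulting unordered set partitions $E^{(k)}$ of $\Hy_2 = \sigma(\Hy_1)$ are obtained from those of $\Hy_1$ by the bijection induced by $\sigma$, and the product form of $\eval$ is insensitive to this relabeling. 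Since the paper already asserts that $f_\Hy$ is independent of the chosen ordering of $\Hy$, no sign issues arise.

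The main (and only) obstacle is the justification of the equivalence step, which amounts to carefully tracking that a triple of slice permutations truly acts as a relabeling of the set partitions $E^{(k)}$ and does not disturb $\eval_\Hy$ beyond the identification $\Hy_1 \simeq \Hy_2$. Once that is in hand, the corollary drops out immediately.
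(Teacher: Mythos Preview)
Your proof is correct and follows exactly the paper's approach: the paper states the corollary as an immediate consequence of Theorem~\ref{th:OD} (the $f_\Hy$ span the HWV space), the observation that equivalent designs yield the same $f_\Hy$, and the identification of $\singlekron\la$ with $\dim\HWV_{\la^*}(\Sym^d\tensor^3(\IC^*)^n)$. Your additional justification of the equivalence step is more detailed than the paper's, which simply declares it ``clear.''
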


Mulmuley~\cite{mulmuley:11} conjectures that deciding $\singlekron\la >0$ 
is possible in polynomial time. This should be 
contrasted with the following result, which follows from~\cite{BDLG:00}.

\begin{proposition}
\label{pro:NPC}
 Given a partition triple $\la\partintostar[n]d$ encoded in unary.
 Then it is $\NP$\dash complete to decide whether there
 exists an obstruction design of type~$\la$.
\end{proposition}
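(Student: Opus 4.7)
The plan is to prove the proposition by separately establishing membership in $\NP$ and $\NP$-hardness via reduction from a known hard problem.

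For membership in $\NP$, the natural certificate is an obstruction design $\Hy$ itself. Since $\la$ is encoded in unary, the degree $d := |\la^{(k)}|$ and the side lengths $\ell_k := \transpose{\la^{(k)}}_1 \leq d$ are polynomial in the input size, so any candidate $\Hy \subseteq [\ell_1]\times[\ell_2]\times[\ell_3]$ with $|\Hy|=d$ has polynomial bit-length. Verification amounts to computing the three marginal distributions $\mu^{(k)}_i := |e^{(k)}_i|$, sorting them weakly decreasingly to obtain candidate partitions, taking transposes, and comparing with $\la^{(k)}$, all in polynomial time.

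For $\NP$-hardness, I would reduce from the feasibility problem for $3$-dimensional $0/1$ axial transportation polytopes: given marginal vectors $(u_i)_{i\in[\ell_1]}, (v_j)_{j\in[\ell_2]}, (w_k)_{k\in[\ell_3]}$ of common sum $d$, decide whether a $0/1$ array $(a_{ijk}) \in \{0,1\}^{\ell_1\times\ell_2\times\ell_3}$ with these slice sums exists. This problem is shown $\NP$-complete in~\cite{BDLG:00}. Given such an instance, I sort each of $u,v,w$ weakly decreasingly to obtain partitions $\mu^{(k)} \partinto d$ and set $\la^{(k)} := \transpose{\mu^{(k)}}$. The construction is polynomial time, and since each marginal is bounded by $\ell_i\ell_j$ (hence polynomial in the instance dimensions), the output triple $\la$ has polynomial unary size.

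The correctness of the reduction rests on a single observation: simultaneously permuting the indices along the three axes of a $0/1$ array does not affect the existence of such an array and only permutes its marginal vectors componentwise. Consequently, the existence of a $0/1$ array with marginals exactly $(u,v,w)$ is equivalent to the existence of one whose marginals are some permutation of $(u,v,w)$, which is precisely the existence of a subset $\Hy \subseteq [\ell_1]\times[\ell_2]\times[\ell_3]$ whose sorted slice-size multisets equal $\mu^{(1)},\mu^{(2)},\mu^{(3)}$, i.e., an obstruction design of type~$\la$. The main obstacle is confirming that the $\NP$-hardness construction of~\cite{BDLG:00} uses marginals that are polynomially bounded (so the unary-encoding hypothesis is respected); once this technicality is discharged, the reduction and its correctness are immediate.
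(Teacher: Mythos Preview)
Your proposal is correct and aligns with the paper's own treatment: the paper gives no proof beyond the remark that the result ``follows from~\cite{BDLG:00}'', and your argument is precisely the natural way to unpack that citation---membership in $\NP$ via the obvious certificate, and hardness by identifying obstruction designs of type~$\la$ with $0/1$ arrays having prescribed $1$-dimensional marginals (up to axis permutations), then invoking the $\NP$-completeness of $3$-dimensional $0/1$ reconstruction from~\cite{BDLG:00}. The technical point you flag about polynomially bounded marginals is the only thing to verify in~\cite{BDLG:00}, and it holds there since their reduction produces arrays whose dimensions (and hence $d$) are polynomial in the source instance.
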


Deciding whether $f_\Hy$ vanishes identically can be difficult even in seemingly 
simple situations. 

\begin{example}\label{ex:AlonTarsi}
Let $\Hy_n:= \{(i,j,n(i-1)+j)\} \subseteq [n]\times [n] \times [n^2]$ 
and $w:=\sum_{i=1}^n |ii1\ket$. 
Identifying $\Tset$ with $[n]$, we can interpret a labeling 
$J\colon\Hy_n\to [n]$ with the filling of an $n\times n$ square with 
numbers in $[n]$. It is easy to see that 
$\eval_{\Hy_n}(J)=0$ unless $J$ is a 
{\em Latin square}, i.e., each number $j\in [n]$ occurs in 
each row and each column of the square exactly once. 
In this case, $J$ defines a permutation of $[n]$ in each 
row and each column of the Latin square. It is straightforward
to see that $\eval_{E^{(1)}}(J)$ equals the product of the 
signs of the row permutations and $\eval_{E^{(2)}}(J)$ 
equals the product of the signs of the column permutations.  
Moreover, $\eval_{E^{(3)}}(J)=1$. 
Let us call the Latin square {\em even} if 
$\eval_{E^{(1)}}(J) \cdot\eval_{E^{(2)}}(J) = 1$ and {\em odd}
if this value equals $-1$. 

Equation~\eqref{def:fR} implies that 
$f_{\Hy_n}(w)$ equals the difference of the number of 
even and the number of odd Latin squares. 

It is easy to see that $f_{\Hy_n}(w) = 0$ if $n$ is odd
(exchange two rows). 
The Alon-Tarsi Conjecture~\cite{AT:92} states 
$f_{\Hy_n}(w) \ne 0$ if $n$ is even.
For instance, this conjecture is known to be true 
for $n\le 24$ or if $n$ differs from an odd prime exactly by $1$, 
cf.~\cite{Dri:98,Gly:10}. The general case, however, is wide open.
We note that $f_{\Hy_n}\ne0$ iff $f_{\Hy_n}(w) \ne 0$. 
\end{example}

\begin{remark}\label{re:kumar}
The construction of explicit highest weight vectors in the 
polynomial scenario leads to questions regarding Latin squares 
and the Alon-Tarsi Conjecture as well, 
cf.\ Kumar~\cite{Kum:12}.  
\end{remark}

\begin{example}\label{ex:favorite}
The obstruction design $\Hy=[n]\times [n] \times [n]$ has 
the type $\la:=(n^2\times n,n^2\times n,n^2\times n)$.  
Since $N(\la)=1$,
Corollary~\ref{cor:kronest} implies 
$k(\la) \le 1$. 
Using known properties of Kronecker coefficients
(cf.\ \cite[\S4.5]{ike:12b}), we get 
$k(\la)= k(n^2\times n,n\times n^2,n\times n^2)$,
which equals the multiplicity of the $\GL_n\times\GL_n$\dash representation 
$\{n\times n^2\}\ot \{n\times n^2\}$
in the $\GL_{n^2}$\dash representation $\{n^2\times n\}$ upon 
restriction to $\GL_n\times\GL_n$. Since $\{n^2\times n\}$ 
stands for the $n$th power of the determinant, we get 
$k(\la)=1$. This implies $f_\Hy\ne 0$. 
(It is not obvious how to verify this directly.) 
Up to scaling, $f_\Hy\in \Sym^n\tensor^3 (\C^{n^2})^*$ 
is the unique polynomial satisfying the beautiful 
invariance property
$g f_\Hy = (\det g)^{-n} f_\Hy$,  
for $g\in\GL_{n^2}$. 
\end{example}

The following fundamental questions arise when studying 
the highest weight vectors~$f_\Hy$ labeled by obstruction designs~$\Hy$.

\begin{questions}
\label{que:obsdes}
 \begin{compactenum}[(1)]
  \item Given an obstruction design~$\Hy$ of type $\la \partintostar[n] d$ and a tensor $w \in \tensor^3\IZ^n$.
  What is the complexity of computing the evaluation $f_\Hy(w)$? 
  Is this problem $\sharpP$\dash hard under Turing reductions?
  \item Given an obstruction design~$\Hy$ of type $\la \partintostar[n] d$.
  What is the complexity of deciding whether $f_\Hy=0$?
  \item For a given partition triple $\la \partintostar[n] d$,
  explicitly describe a maximal linear independent subset of the set of obstruction designs of type~$\la$!
 \end{compactenum}
\end{questions}
Let $\Specht{\la^{(k)}}$ denote the irreducible $\aS_d$-representation corresponding to~$\la^{(k)}$ 
(Specht-module). 
An answer to Question~\ref{que:obsdes}(3) would result in an explicit basis of
$(\Specht{\la^{(1)}}\otimes\Specht{\la^{(2)}}\otimes\Specht{\la^{(3)}})^{\aS_d}$
and solve one of the most fundamental open questions in the representation theory of the symmetric groups.

\subsection{Determinantal Complexity}
We now turn from the tensor scenario to the polynomial scenario.
Our goal is to find polynomials in the vanishing ideal of~$\GL_{n^2}\det_n$
(compare~\cite{lamare:10} for an interesting result). 
For $\la \partinto[n^2]dn$, let $\pl \la d n$ denote the 
multiplicity of $\Weyl\la$ in the {\em plethysm} $\Sym^d\Sym^n\IC^{n^2}$.
From \cite[eq.~(5.2.6)]{BLMW:11} we know that
\begin{align}
\label{eq:multipl-det}
 \coord{\GL_{n^2}\det_n}_{\geq 0} = \bigoplus_{d \geq 0} \bigoplus_{\la \partinto[n^2] n d} \symmkron{\recta n d}{\la} \Weyl{\la^*},
\end{align}
where $\symmkron{\recta n d}{\la}$ is the \emph{symmetric Kronecker coefficients}, defined in \cite{BLMW:11}.
A sufficient criterion for the existence of a HWV of
weight $\la^*$ in the vanishing ideal $I(\GL_{n^2}\det_n)$ is given by
\begin{align}
\label{eq:plethsymmkron}
 \pl \la d n > \symmkron{\recta n d}{\la},
\end{align}
since 
$\mult_{\la^*}(I(\GL_{n^2}\det_n)) \ \stackrel{\eqref{eq:multipl-det}}{\geq} 
 \ \pl \la d n - \symmkron{\recta n d}{\la}$.

Here are two examples of partitions satisfying \eqref{eq:plethsymmkron}, 
found by computer calculations:
$(13,13,2,2,2,2,2)\partinto[7]36$ in degree $\tfrac {36}3 = 12$
and $(15,5,5,5,5,5,5)\partinto[7]45$ in degree $\tfrac{45}3 = 15$.
An abundance of other partitions satisfying \eqref{eq:plethsymmkron} is given in~\cite[Appendix]{ike:12b}.

The fact that a partition with 7 rows occurs in the vanishing ideal $I(\GL_9\det_3)_{12} \subseteq \Sym^{12}\Sym^3(\IC^9)^*$
implies that the same partition occurs in the intersection $I(\GL_9\det_3) \cap \Sym^{12}\Sym^3(\IC^7)^*$,
see the inheritance theorems in~\cite{BLMW:11}.
Hence we get $f \notin \overline{\GL_9\det_3}$
for Zariski almost all polynomials $f \in \Sym^3(\IC^7)^*$.
Note that an explicit construction and evaluation of the HWVs
in $\Sym^{d}\Sym^n\IC^\ell$ would directly give lower bounds on~$\docc$
for specific~$f$.

\section{Explicit HWVs}\label{se:explicit-HWVs}

The goal of this section is to prove Theorem~\ref{th:OD}.

\subsection{A Consequence of Schur-Weyl Duality}

The vector space $\tensor^d \IC^n$ is a $\GL_n \times \aS_d$\dash representation
via the commuting actions of $\aS_d$ and~$\GL_n$, defined for $\aS_d$ by 
\begin{align*}\label{eq:snaction}
 \pi (w_1 \otimes w_2 \otimes \cdots \otimes w_d) \coloneqq w_{\pi^{-1}(1)} \otimes \cdots \otimes w_{\pi^{-1}(d)} ,
  \quad \pi \in \aS_d ,
\end{align*}
and for $\GL_n$ as follows:
$$
 g(w_1 \ot w_2 \ot \cdots \ot w_d) \coloneqq  
gw_1 \otimes gw_2 \otimes \cdots \otimes gw_d, \quad g \in \GL_n.
$$
It follows that $\aS_d$ leaves the highest weight vector space $\HWV_\la(\tensor^d\IC^n)$
invariant. 

Recall that in~\eqref{eq:def-eval} we assigned to a set partition $\setp$ of $[d]$
a linear form $\eval_\setp$ on $\tensor^d \C^n$. 

\begin{proposition}\label{pro:generatePoly}
Let $\la\partinto[n]d$. 
If $\setp$ runs over all set partitions of $[d]$ 
with type $\la$, then the corresponding $\eval_\setp$ 
generate the vector space 
$\HWV_{\la^*}(\tensor^{d} (\C^n)^*)$
of highest weight vectors of weight $\la^*$. 
\end{proposition}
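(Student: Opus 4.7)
The plan is to verify in three stages that each $\eval_\setp$ lies in $\HWV_{\la^*}(\tensor^d(\C^n)^*)$, to identify this HWV space via Schur--Weyl duality, and then to invoke irreducibility of the Specht module $\Specht\la$. For the first stage I would separately check the two defining conditions of a HWV. Under $\diag(\alpha)\in\GL_n$, each factor $\det J|_e$ scales by the product of the $\alpha_i$'s over the rows of $J$ appearing in that minor; summing over $e\in\setp$ and dualizing, position $i$ receives a contribution from each block whose selected rows include $i$, and this count is precisely $\la_i=\transpose\mu_i$, so $\eval_\setp$ carries weight $\la^*$. For $U_n$-invariance I would argue that the rows used in each minor $\det J|_e$ span a $U_n$-invariant subspace of $\C^n$; the action of $u\in U_n$ modifies each selected row only by other rows in the same subspace, hence does not change the minor, and the product $\eval_\setp$ is fixed.

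Next, Schur--Weyl duality yields, as $\GL_n\times\aS_d$-representations,
\[
 \tensor^d(\C^n)^* \;\cong\; \bigoplus_{\la\partinto[n]d}(\Weyl\la)^*\otimes\Specht\la,
\]
using the self-duality of Specht modules over $\C$. The weight-$\la^*$ HWV subspace of $(\Weyl\la)^*$ is one-dimensional, while it vanishes inside $(\Weyl\mu)^*$ for $\mu\neq\la$, so $\HWV_{\la^*}(\tensor^d(\C^n)^*)\cong\Specht\la$ as an $\aS_d$-module.

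To finish, I would observe that the $\aS_d$-action by permuting tensor factors commutes with $\GL_n$ and hence preserves $\HWV_{\la^*}(\tensor^d(\C^n)^*)$. A direct calculation gives $\pi\cdot\eval_\setp = \pm\eval_{\pi(\setp)}$, the sign coming from column reorderings inside each minor, and as $\pi$ ranges over $\aS_d$ the set partition $\pi(\setp)$ runs over every set partition of type $\la$. Therefore $M:=\Span\{\eval_{\setp'}:\mathrm{type}(\setp')=\la\}$ is an $\aS_d$-submodule of $\HWV_{\la^*}(\tensor^d(\C^n)^*)$. Since $\Specht\la$ is irreducible, $M$ is either zero or the full HWV space. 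To rule out the former I would exhibit $J\colon[d]\to\C^n$ for which each $\det J|_e$ equals $\pm1$: within each block $e$, choose $J(s)$ for $s\in e$ to form a permutation matrix in the rows selected by $\det J|_e$, yielding $\eval_\setp(J)=\pm1\neq 0$. The main technical hurdle is the $U_n$-invariance step, which is convention-sensitive and demands careful bookkeeping of which rows appear in each minor and how they interact under the triangular action; once that is in place, the proposition reduces to the abstract Schur--Weyl decomposition and the irreducibility of $\Specht\la$.
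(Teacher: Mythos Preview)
Your proposal is correct and follows essentially the same approach as the paper: verify that the $\eval_\setp$ are highest weight vectors of weight $\la^*$, identify $\HWV_{\la^*}(\tensor^d(\C^n)^*)$ with an irreducible Specht module via Schur--Weyl duality, and conclude that the $\aS_d$-stable span of the $\eval_\setp$ must be the whole space. The only cosmetic differences are that the paper checks the HWV property for a single canonical $\eval_{\setp_\la}$ by writing it as a tensor product $\bigotimes_i\bra\widehat{\mu_i}|$ of known HWVs (rather than checking weight and $U_n$-invariance for general $\setp$ directly as you do), and that the paper leaves the nonvanishing of $\eval_{\setp_\la}$ implicit in this tensor factorization, whereas you spell out an explicit $J$ with $\eval_\setp(J)\neq 0$.
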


\begin{proof}
For $e=\{1,\ldots,\ell\}$, $\ell\in\N$,
the multilinear map $(\C^\ell)^\ell \to \C,\, J \mapsto\det|_e$
defines a linear form $\bra \widehat{\ell}|$ on $ \tensor^\ell\C^\ell$.  
It is obvious that $\bra\widehat\ell|$ is 
a HWV of weight $\recta \ell {(-1)}$.

Let $\mu:=\transpose \la$ denote 
the transposed partition of $\la$ and 
consider the following set partition of $[d]$ of type~$\la$: 
$$
 \setp_\la := \big\{ \{1,2,\ldots,\mu_1\}, \{\mu_1+1,\ldots,\mu_1+\mu_2\},\ldots \big\}.
$$
A moment's thought reveals that 
$\eval_{\setp_{\la}}  = \bigotimes_{i=1}^{\la_1} \bra \widehat{\mu_i}|$.
From this description, it is readily checked that $\eval_{\setp_{\la}}$ 
is a HWV of weight~$\la^*$.

All $\eval_{\setp}$ are obtained from from $\eval_{\setp_{\la}}$ 
by applying arbitrary permutations in $\aS_d$. 

Recall that $\Weyl{\la}$ and $\Specht\la$ denote the irreducible $\GL_n$\dash representation 
and irreducible $\aS_d$\dash representation corresponding to~$\la$, respectively.
The fundamental {\em Schur-Weyl duality} states that  
\[
 \tensor^d \IC^n \simeq \bigoplus_{\la \partinto[n] d} \Weyl{\la} \otimes \Specht{\la}
\]
as $\GL_n \times \aS_d$-representations, e.g., see~\cite[Sec.~4.2.4]{gw:09}.

Going over to the dual $W := (\IC^n)^*$ 
we obtain 
$\HWV_{\la^*}\left(\tensor^d W\right) \simeq \HWV_{\la^*}(\Weyl{\la^*}) \otimes \Specht{\la}^*$.
But $\HWV_{\la^*}(\Weyl{\la^*})$ is 1-dimensional (Lemma~\ref{lem:highestweightvector}) and
so, as $\aS_d$\dash representations, we have
$\HWV_{\la^*}\left(\tensor^d W\right) \simeq \Specht{\la}^*$, which is irreducible.
Hence the linear span of the $\aS_d$\dash orbit of $\eval_{\setp_{\la}}$ 
equals 
$\HWV_{\la^*}\left(\tensor^d W\right)$. 
\end{proof}

\subsection{Proof of Theorem~\ref{th:OD}}\label{se:pfthOD}

Obstruction designs can be looked at in different, equivalent ways.
Recall that an obstruction design $\Hy\subseteq [\ell_1]\times [\ell_2]\times [\ell_3]$ 
defines three set partitions $E^{(k)}$ of $\Hy$ satisfying the {\em intersection property} 
$|e^{(1)}\cap e^{(2)} \cap e^{(3)}| \le 1$ for all 
$(e^{(1)},e^{(2)},e^{(3)}) \in E^{(1)}\times  E^{(2)}\times  E^{(3)}$. 

Suppose now that $V$ is an abstract finite set endowed with three set partitions $\setp^{(k)}$ 
of the set~$V$ satisfying the above intersection property. Then the incidence structure 
$$
 \Hy:=\{ (e^{(1)},e^{(2)},e^{(3)}) \mid e^{(1)}\cap e^{(2)} \cap e^{(3)} \ne\emptyset\} 
  \subseteq \setp^{(1)}\times  \setp^{(2)}\times  \setp^{(3)}
$$
is an obstruction design (after numbering each of the sides $\setp^{(k)}$).  
This obstruction design allows to retrieve the set $V$ and the partitions $\setp^{(k)}$. 
In fact,  $\Hy\to V,\,(e^{(1)},e^{(2)},e^{(3)}) \mapsto v$ such that $\{v\}= e^{(1)}\cap e^{(2)} \cap e^{(3)}$ is a bijection.
Moreover, this maps the $1$-slice 
$\{(e^{(2)},e^{(3)}) \mid (e^{(1)},e^{(2)},e^{(3)}) \in \Hy\}$ to $e^{(1)}$. 
Similarly for the other slices. 

Now assume $V=[d]$, $\la \partintostar[n]d$, and suppose that 
$\setp^{(k)}$ is a set partition of $[d]$ of type $\la^{(k)}$, for $k=1,2,3$. 
Proposition~\ref{pro:generatePoly} implies that  
$\eval_{\setp^{(1)}} \ot \eval_{\setp^{(2)}} \ot \eval_{\setp^{(3)}}$ 
defines a highest weight vector of weight $\la^*$ in  
$\HWV_{\la^*}(\tensor^{d} \tensor^3(\C^n)^*)$. 
Moreover, these vectors span the highest weight vector space, when the 
$\setp^{(k)}$ independently run through all set partitions of $[d]$ of type $\la^{(k)}$. 

The linear forms on $\tensor^d\tensor^3 \C^n$ that are symmetric 
with respect to $\aS_d$ are obtained by composing the linear forms 
on $\tensor^d\tensor^3 \C^n$ with the symmetrization
\begin{equation}\label{eq:sym-map}
 \proj_d\colon \tensor^d \tensor^3 \C^n \twoheadrightarrow \Sym^d\tensor^3 \C^n 
\end{equation}
given by $\frac{1}{d!} \sum_{\pi \in \aS_d} \pi$. 
It follows that the linear forms 
$\big(\eval_{\setp^{(1)}} \ot \eval_{\setp^{(2)}} \ot \eval_{\setp^{(3)}}\big)\circ\proj_d $ 
generate the highest weight vector space 
$\HWV_{\la^*}(\Sym^{d} \tensor^3(\C^n)^*)$. 

If the three set partitions $\setp^{(k)}$ satisfy the above intersection property, 
then they define an obstruction design $\Hy$ by the above reasoning. 
Moreover, we have 
$f_\Hy = \big(\eval_{\setp^{(1)}} \ot \eval_{\setp^{(2)}} \ot \eval_{\setp^{(3)}}\big)\circ \proj_d $ 
by the definition of $f_\Hy$. 

To complete the proof of Theorem~\ref{th:OD}, it therefore suffices to show 
that if the intersection property is violated, then the resulting form vanishes. 

\begin{lemma}
\label{lem:zeropattern}
Suppose that there are $e^{(k)}\in\setp^{(k)}$ for $k=1,2,3$ such that 
$e^{(1)} \cap e^{(2)} \cap e^{(3)}$ contains more than one element. 
Then $\big(\eval_{\setp^{(1)}} \ot \eval_{\setp^{(2)}} \ot \eval_{\setp^{(3)}}\big)\circ \proj_d$ 
vanishes.

\end{lemma}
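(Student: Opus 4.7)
The plan is to exhibit a single transposition $\tau\in\aS_d$ whose pullback negates the form $F:=\eval_{\setp^{(1)}}\otimes\eval_{\setp^{(2)}}\otimes\eval_{\setp^{(3)}}$ on $\tensor^d\tensor^3\IC^n$; then an obvious pairing of permutations in the sum defining $\proj_d$ immediately gives $F\circ\proj_d=0$.

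By hypothesis I may pick two distinct elements $a,b\in e^{(1)}\cap e^{(2)}\cap e^{(3)}$ and set $\tau:=(a\,b)\in\aS_d$. First I identify $\tensor^d\tensor^3\IC^n$ with $\tensor^3\tensor^d\IC^n$ by reordering tensor factors. Under this identification the $\aS_d$-action on $\tensor^d\tensor^3\IC^n$ that permutes the $d$ outer factors becomes the simultaneous (``diagonal'') action of $\aS_d$ on each of the three copies of $\tensor^d\IC^n$, and $F$ is simply $\eval_{\setp^{(1)}}$ applied to the first factor times $\eval_{\setp^{(2)}}$ applied to the second times $\eval_{\setp^{(3)}}$ applied to the third.

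Next I check how each $\eval_{\setp^{(k)}}$ transforms under $\tau$. Viewing a pure tensor in $\tensor^d\IC^n$ as the columns of an $n\times d$ matrix $J^{(k)}$, the action of $\tau$ simply swaps columns $a$ and $b$. Since $a$ (resp.\ $b$) lies in the unique block $e^{(k)}$ of $\setp^{(k)}$, no other block of $\setp^{(k)}$ involves the index $a$ or the index $b$; so the swap leaves the factors $\det J^{(k)}|_{e'}$ unchanged for every $e'\ne e^{(k)}$, while $\det J^{(k)}|_{e^{(k)}}$ flips sign because two of its columns are exchanged. Hence $\eval_{\setp^{(k)}}\circ\tau=-\eval_{\setp^{(k)}}$ for each $k\in\{1,2,3\}$, and multiplying across the three factors gives $F\circ\tau=(-1)^3F=-F$.

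Finally, writing $\proj_d=\frac{1}{d!}\sum_{\pi\in\aS_d}\pi$ and pairing each $\pi$ with $\tau\pi$ yields $F\circ(\tau\pi)=-F\circ\pi$, so the terms cancel in pairs and $F\circ\proj_d=0$. I do not expect any real obstacle here; the only place to be careful is the sign bookkeeping across the three factors, which relies on the diagonal nature of the $\aS_d$-action observed after the reordering in the first step.
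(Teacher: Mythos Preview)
Your proof is correct and follows essentially the same approach as the paper: pick the transposition $\tau$ swapping two elements of $e^{(1)}\cap e^{(2)}\cap e^{(3)}$, observe that each $\eval_{\setp^{(k)}}$ picks up a sign under $\tau$ so that $F\circ\tau=(-1)^3F$, and pair $\pi$ with $\tau\pi$ in the symmetrization to get cancellation. The only difference is cosmetic---you make the diagonal nature of the $\aS_d$-action explicit via the reordering $\tensor^d\tensor^3\IC^n\cong\tensor^3\tensor^d\IC^n$, which the paper leaves implicit.
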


\begin{proof} 
Suppose that the distinct vertices $\yv$ and $\yv'$ are both contained in 
$e^{(1)} \cap e^{(2)} \cap e^{(3)}$.
Let $\tau\colon V(\Hy)\to V(\Hy)$ denote the transposition switching $\yv$ and~$\yv'$.
From a labeling $J^{(k)}\colon V(\Hy) \to \C^n$
we get a new labeling $J^{(k)} \circ \tau$ by composition of maps.
Recall that 
$\eval_{\Lambda^{(k)}}(J^{(k)}) = \prod_{e\in\Lambda^{(k)}} \det J^{(k)}|_e$. 
If $e\ne e^{(k)}$, then $\yv,\yv' \not\in e$ and 
$\det J^{(k)}|_e = \det (J^{(k)}\circ\tau)|_e$. 
On the other hand, if $e=e^{(k)}$, then $\yv,\yv'\in e$ and we obtain 
$\det J^{(k)}|_e = - \det (J^{(k)}\circ\tau)|_e$ 
since applying $\tau$ amounts to switching the columns 
indexed by $\yv$ and $\yv'$.
We conclude that 
$$
 \eval_{\setp^{(k)}}(J^{(k)})  = -\eval_{\setp^{(k)}}(J^{(k)}\circ \tau) .
$$
Writing
$F:=\eval_{\setp^{(1)}} \ot \eval_{\setp^{(2)}} \ot \eval_{\setp^{(3)}}$ 
we obtain
$F(J) = (-1)^3 F(J\circ\tau)$. 
It follows that 
$$
 \sum_{\pi\in\aS_d} F\big(\tensor^3 (J^{(k)}\circ\pi)\big) = 0 ,
$$ 
which completes the proof. 
\end{proof}

\section{Proof of Lemma 4.4}
\label{se:NVatMaMu}

Recall from Section~\ref{se:pfthOD} that we may interpret an obstruction design~$\Hy$ 
as a set $V(\Hy)$ endowed with three set partitions $E^{(k)}$ of~$V$ satisfying the intersection property. 

The obstruction design $\Hy_\kappa$ introduced in Section~\ref{ssec:lbMaMu} then can be 
visualized as follows (see Figure~\ref{fig:hookobstructiondesign}). 
The vertex set $V(\Hy)$ is partioned into disjoint sets $V^{(1)} \dotcup V^{(2)} \dotcup V^{(3)} \dotcup \{y^0\}$,
where $|V^{(k)}| = \kappa$ for all~$k$. 
Each $E^{(k)}$ consists of one hyperedge 
$e^{(k)} \coloneqq V^{(k+1)} \cup V^{(k+2)} \cup \{y^0\}$ of size $2\kappa+1$ (addition mod~$3$ in the exponent)
and $\kappa$ singletons. 
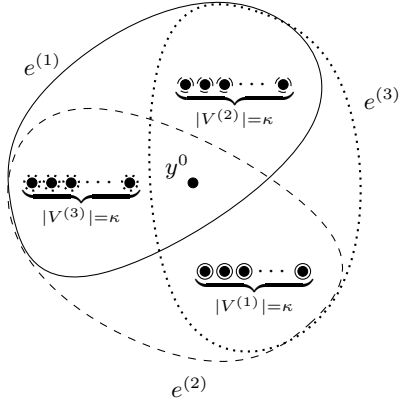
\begin{figure}[h]
\vspace{-0.35cm}
  \begin{center}
\begin{tikzpicture}[scale=0.8]
\draw[obsdesEdgeI] (-3.0,-0.5) .. controls (-2.86363,-1.13636) and (-2.59090,-1.77272) .. (-1.5,-1.5) .. controls (-0.40909,-1.22727) and (1.5,-0.04545) .. (2.0,1.0) .. controls (2.5,2.045454) and (1.590909,2.954545) .. (0.5,3.0) .. controls (-0.59090,3.045454) and (-1.86363,2.227272) .. (-2.5,1.5) .. controls (-3.13636,0.772727) and (-3.13636,0.136363) .. (-3.0,-0.5) -- cycle;
\draw[obsdesEdgeII] (1.933012,-2.34807) .. controls (2.415937,-1.91180) and (2.830681,-1.35742) .. (2.049038,-0.54903) .. controls (1.267394,0.259353) and (-0.71063,1.321765) .. (-1.86602,1.232050) .. controls (-3.02141,1.142336) and (-3.35416,-0.09950) .. (-2.84807,-1.06698) .. controls (-2.34198,-2.03446) and (-0.99705,-2.72759) .. (-0.04903,-2.91506) .. controls (0.898980,-3.10253) and (1.450087,-2.78435) .. (1.933012,-2.34807) -- cycle;
\draw[obsdesEdgeIII] (1.066987,2.848076) .. controls (0.447698,3.048163) and (-0.23977,3.130156) .. (-0.54903,2.049038) .. controls (-0.85830,0.967919) and (-0.78936,-1.27631) .. (-0.13397,-2.23205) .. controls (0.521415,-3.18779) and (1.763256,-2.85504) .. (2.348076,-1.93301) .. controls (2.932895,-1.01098) and (2.860692,0.500320) .. (2.549038,1.415063) .. controls (2.237383,2.329806) and (1.686276,2.647988) .. (1.066987,2.848076) -- cycle;
\node [obsdesnode,label={[label distance=-0.5em]above left:{\ensuremath{y^0}}}] at (0,0) {};

\node [label={[label distance=-0em]above:{\ensuremath{e^{(1)}}}}] at (-2.5,1.5) {};
\node [label={[label distance=-0em]below:{\ensuremath{e^{(2)}}}}] at (-0.04903,-2.91506) {};
\node [label={[label distance=-0em]right:{\ensuremath{e^{(3)}}}}] at (2.549038,1.415063) {};

\node [obsdesnode] at ($(-60:1.7cm)+(-2em,0)$) {};
\draw [obsdesEdgeI] ($(-60:1.7cm)+(-2em,0)$) circle (0.4em);
\node [obsdesnode] at ($(-60:1.7cm)+(-1em,0)$) {};
\draw [obsdesEdgeI] ($(-60:1.7cm)+(-1em,0)$) circle (0.4em);
\node [obsdesnode] at ($(-60:1.7cm)+(0em,0)$) {};
\draw [obsdesEdgeI] ($(-60:1.7cm)+(0em,0)$) circle (0.4em);
\node at ($(-60:1.7cm)+(1.5em,0)$) {$\cdots$};
\node [obsdesnode] at ($(-60:1.7cm)+(3em,0)$) {};
\draw [obsdesEdgeI] ($(-60:1.7cm)+(3em,0)$) circle (0.4em);
\node at ($(-60:1.7cm)+(0.5em,-1.3em)$) {$\underbrace{\hphantom{mmmmmi}}_{\tiny |V^{(1)}|=\kappa}$};

\node [obsdesnode] at ($(60:1.7cm)+(-3em,0.5em)$) {};
\draw [obsdesEdgeII] ($(60:1.7cm)+(-3em,0.5em)$) circle (0.4em);
\node [obsdesnode] at ($(60:1.7cm)+(-2em,0.5em)$) {};
\draw [obsdesEdgeII] ($(60:1.7cm)+(-2em,0.5em)$) circle (0.4em);
\node [obsdesnode] at ($(60:1.7cm)+(-1em,0.5em)$) {};
\draw [obsdesEdgeII] ($(60:1.7cm)+(-1em,0.5em)$) circle (0.4em);
\node at ($(60:1.7cm)+(0.5em,0.5em)$) {$\cdots$};
\node [obsdesnode] at ($(60:1.7cm)+(2em,0.5em)$) {};
\draw [obsdesEdgeII] ($(60:1.7cm)+(2em,0.5em)$) circle (0.4em);
\node at ($(60:1.7cm)+(-0.5em,-0.8em)$) {$\underbrace{\hphantom{mmmmmi}}_{\tiny |V^{(2)}|=\kappa}$};

\node [obsdesnode] at ($(180:1.7cm)+(-3em,0)$) {};
\draw [obsdesEdgeIII] ($(180:1.7cm)+(-3em,0)$) circle (0.4em);
\node [obsdesnode] at ($(180:1.7cm)+(-2em,0)$) {};
\draw [obsdesEdgeIII] ($(180:1.7cm)+(-2em,0)$) circle (0.4em);
\node [obsdesnode] at ($(180:1.7cm)+(-1em,0)$) {};
\draw [obsdesEdgeIII] ($(180:1.7cm)+(-1em,0)$) circle (0.4em);
\node at ($(180:1.7cm)+(0.5em,0)$) {$\cdots$};
\node [obsdesnode] at ($(180:1.7cm)+(2em,0)$) {};
\draw [obsdesEdgeIII] ($(180:1.7cm)+(2em,0)$) circle (0.4em);
\node at ($(180:1.7cm)+(-0.5em,-1.3em)$) {$\underbrace{\hphantom{mmmmmi}}_{\tiny |V^{(3)}|=\kappa}$};
\end{tikzpicture}
\vspace{-0.3cm}
    \caption{The unique family of obstruction designs corresponding to the hook partition triple $\la(\kappa)$.} 
    \nopar\label{fig:hookobstructiondesign}
  \end{center}
\vspace{-0.5cm}
\end{figure}


We outline now the proof of Lemma~\ref{le:NEU}. 
For notational convenience, we define the triples of vectors 
\begin{align}
\label{eq:mamutriples}
\tri{ij\elll} \coloneqq \big(|ij\ket,|j\elll\ket,|\elll i\ket\big) \in (\IC^{m \times m})^3
\end{align}
(omitting parentheses) and  put $\Tset \coloneqq \{ \tri{ij\elll} \mid 1 \leq i,j,\elll \leq m \}$.
Recall from \eqref{eq:defmamu}:
\[
 \MaMu_m = \sum_{i,j,\elll=1}^m \tri{ij\elll}^{(1)} \otimes \tri{ij\elll}^{(2)} \otimes \tri{ij\elll}^{(3)}.
\]
Let $A^{(k)}\colon \C^{m\times m}\to \C^{m\times m}$ be linear maps and 
$A=(A^{(1)},A^{(2)},A^{(3)})$.  
For a triple labeling 
$J\colon V(\Hy)\to\Tset$ we define the composed triple labeling 
$AJ\colon V(\Hy)\to (\C^{m\times m})^3$ by 
$(AJ)^{(k)}(y) := A^{(k)}(J^{(k)}(y) )$ for $y\in V(\Hy)$.  

After fixing a numbering of the vertices of $\Hy$, 
Equation~\eqref{def:fR} can be written as 
\begin{align}\textstyle
\label{eq:evalatmamu}
\tag{\ensuremath{\dagger}}
f_\Hy(A\MaMu_m)
= \sum_{J\colon V(\Hy) \to \Tset} \eval_\Hy(AJ) .
\end{align}
The strategy is to construct a triple $A$
of $m^2 \times m^2$ matrices
having affine linear entries in the indeterminates 
$\omegaX_1,\ldots,X_N$ 
with the property that the coefficient of a specific monomial~$\omegaXM$ in the $X_i$
in $f_\Hy(A\MaMu_m)$ is nonzero.
Hence $f(X_1,\ldots,X_N):=f_\Hy(A\MaMu_m)$ is not the zero polynomial in the $X_i$. 
By perturbing the $A^{(k)}$ we may assume w.l.o.g.\ that all 
$A^{(k)}$ are invertible. 
There is a substitution of the~$X_i$ with suitable values 
$\alpha_1,\ldots,\alpha_N\in\C$
such that $f(\alpha)\ne 0$. Making this substitution in $A$ yields
the desired matrix triple over $\C$.


\subsection{Invariance in each $V^{(k)}$}
We use the short notation 
$\eval_{e}(J) \coloneqq \det J^{(k)}|_e$  
for a hyperedge $e \in E^{(k)}$ and a triple labeling~$J$.

\begin{claim}
\label{cla:invarineachVk}
 Let $\sigma\colon V(\Hy)\to V(\Hy)$ be a bijection satisfying
 $\sigma(V^{(k)})=V^{(k)}$ for all $k\in\{1,2,3\}$.
 For every triple labeling $J\colon V(\Hy) \to (\IC^{m^2})^3$ we have
 $\eval_\Hy(J) = \eval_\Hy(J \circ \sigma)$.
\end{claim}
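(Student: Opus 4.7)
The plan is to expand $\eval_\Hy(J\circ\sigma)$ using the definition $\eval_\Hy(J) = \prod_{k=1}^3 \prod_{e \in E^{(k)}} \det J^{(k)}|_e$, and track the effect of $\sigma$ on each factor. The key structural fact I would exploit is that since $\sigma$ preserves each $V^{(k)}$ setwise and automatically fixes $y^0$ (because $y^0 \notin V^{(k)}$ for any $k$), $\sigma$ maps each hyperedge $e^{(k)} = V^{(k+1)}\cup V^{(k+2)}\cup\{y^0\}$ to itself, and likewise sends the singleton family $V^{(k)}$ of $E^{(k)}$ to itself.

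Next I would handle the two kinds of blocks separately. For the singletons in $E^{(k)}$, the contribution is $\prod_{y\in V^{(k)}} J^{(k)}(y)_1$, which is visibly invariant under reindexing by the bijection $\sigma|_{V^{(k)}}$. For the big hyperedge $e^{(k)}$, the matrix $(J\circ\sigma)^{(k)}|_{e^{(k)}}$ is obtained from $J^{(k)}|_{e^{(k)}}$ by permuting its columns via $\sigma|_{e^{(k)}}$, so
\[
\det(J\circ\sigma)^{(k)}|_{e^{(k)}} = \sgn\!\bigl(\sigma|_{e^{(k)}}\bigr)\, \det J^{(k)}|_{e^{(k)}}.
\]
Since $\sigma$ fixes $y^0$ and restricts to a permutation on each of $V^{(k+1)}$ and $V^{(k+2)}$ separately, $\sgn(\sigma|_{e^{(k)}}) = \sgn(\sigma|_{V^{(k+1)}})\,\sgn(\sigma|_{V^{(k+2)}})$.

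Multiplying over $k\in\{1,2,3\}$, the overall sign picked up is
\[
\prod_{k=1}^3 \sgn\!\bigl(\sigma|_{V^{(k+1)}}\bigr)\sgn\!\bigl(\sigma|_{V^{(k+2)}}\bigr) \;=\; \prod_{k=1}^3 \sgn\!\bigl(\sigma|_{V^{(k)}}\bigr)^2 \;=\; 1,
\]
because each $V^{(k)}$ appears in exactly two of the three hyperedges (each $V^{(k)}$ sits inside $e^{(k-1)}$ and $e^{(k+1)}$). Combined with the singleton invariance, this gives $\eval_\Hy(J\circ\sigma) = \eval_\Hy(J)$.

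There is no real obstacle: the claim is an elementary parity count once the combinatorial structure of $\Hy_\kappa$ is unpacked. The only subtlety worth stating explicitly in the write-up is that each $V^{(k)}$ participates in exactly two of the three large hyperedges (as a ``side'' of the hook opposite to $e^{(k)}$), which is what produces the square and hence cancels the signs. I would make this incidence explicit to make the cancellation transparent.
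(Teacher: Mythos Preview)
Your proof is correct and follows essentially the same approach as the paper's: both arguments exploit that each $V^{(k)}$ is contained in exactly two of the three big hyperedges $e^{(k')}$, so the sign contributions square to~$1$. The only cosmetic difference is that the paper first reduces to a single transposition in one $V^{(k)}$ (using that transpositions generate the symmetric group) and then observes the $(-1)^2$ cancellation, whereas you carry out the same parity count directly for a general~$\sigma$.
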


\begin{proof}
It suffices to show the claim for a transposition~$\sigma$ exchanging two elements of $V^{(1)}$,
because the situation for $V^{(2)}$ and $V^{(3)}$ is completely symmetric.
We have $\prod_{e \in E^{(1)}} \eval_e(J) = \prod_{e \in E^{(1)}} \eval_e(J\circ\sigma)$,
because, up to reordering, both products have the same factors.
For $k\in\{2,3\}$ we have
$\eval_{e}(J) = \eval_{e}(J\circ\sigma)$ for every singleton hyperedge~$e \in E^{(k)}$
and $\eval_{e^{(k)}}(J) = -\eval_{e^{(k)}}(J\circ\sigma)$.
Therefore
$\prod_{e \in E^{(k)}} \eval_e(J) = -\prod_{e \in E^{(k)}} \eval_e(J\circ\sigma)$.
As a result we get
$\eval_\Hy(J) = (-1)^2\eval_\Hy(J\circ\sigma).$
\end{proof}

\subsection{Special Structure of the Matrix Triple}

Let $\Gamma \coloneqq \IC[\omegaX_{i}^{(k)} : 1 \leq k \leq 3, \, 1 \leq i \leq m ]$
denote the polynomial ring in $3m$ variables.
Recall that $m$ is odd and $\kappa = \frac {m^2-1}2$.
We set $\bar i \coloneqq m+1-i$ for $1 \leq i \leq m$, thinking of 
$i \mapsto \bar i\vphantom{\big(\big)}$ as a reflection at  $a \coloneqq (m+1)/2$. 
Note $\bar a = a$. 
We consider the set of pairs $\Tup \coloneqq \{1,\ldots,m\} \times \{1,\ldots,m\} \setminus \{aa\}$
and fix an arbitrary bijection $\varphi\colon \Tup \to \{2,\ldots,m^2\}$.

For each $1 \leq k \leq 3$ we define the 
matrix $A^{(k)}$ of format $(m\times m)\times m^2$
with the following affine linear entries in $X_i^{(k)}$:
\[
A^{(k)}|ij\ket \coloneqq
\begin{cases}
 \omegaX^{(k)}_{a} |1\ket& \text{ if } i=j=a\\
 |\varphi(i\bar i)\ket + \omegaX_{i}^{(k)}|1\ket & \text{ if } i \neq j \text{ and } j = \bar i \\
 |\varphi(ij)\ket & \text{ if } j \neq \bar i
\end{cases}.\]
Hence $A^{(k)}$ looks as follows:
\begin{align}
 \tag{\textreferencemark}
 \label{eq:matrixtriple}
\left(\begin{array}{ccccccc|c}
\!\!\omegaX_a^{(k)}\!\! & \!\!\omegaX_1^{(k)}\!\!   & \!\!\cdots\!\! & \!\!\omegaX_{a-1}^{(k)}\!\! & \!\!\omegaX_{a+1}^{(k)}\!\!& \!\!\cdots\!\! & \!\!\omegaX_{m}^{(k)}\!\! \\
 &1& & & && & \\
 & & \!\!\ddots\!\!&&  & &  & \\
 & &  &1&&   &  &\smash{\raisebox{0.3cm}{\text{\LARGE $0$}}}\\
 & &  & &1&  &  &  \\
 & & & & &  \!\!\ddots\!\!  &  & \\
 & & & && &1 &  \\
\hline
 & &    & &&& &\\
 & & &\smash{\text{\LARGE $0$}} &  & &  & \id_{m^2-m}\\
 & &  && & &  &
\end{array}\right),
\end{align}
where we arranged the rows and columns as follows:
The left $m$ columns correspond to the vectors~$|i\bar i\ket$,
where the leftmost one corresponds to~$|aa\ket$.
The top row corresponds to the vector~$|1\ket$
and the following $m-1$ rows correspond to the vectors $|\varphi(i\bar i)\ket$.
Recall that $f_\Hy(A\MaMu_m)$
is a sum of products of determinants of submatrices of the $A^{(k)}$.

The sum $f_\Hy(A\MaMu_m)$ is an element of $\Gamma$
and we are interested in its coefficient of the monomial $\omegaXM$, where
\begin{align}
\label{eq:omegamonomial}
 \omegaXM \coloneqq \prod_{k=1}^3 \omegaX^{(k)}_{a} \prod_{i=1}^m \big(\omegaX_{i}^{(k)}\big)^{|i - \bar i|}.
\end{align}
We remark that the degree of $\omegaXM$ is $3( 1+\sum_{i=1}^m |i-\bar i|)$.
It is readily checked that $\sum_{i=1}^m |i-\bar i| = \kappa$.

We call a triple labeling $J\colon V(\Hy) \to \Tset$
\emph{nonzero},
if the coefficient of $\omegaXM$ in the polynomial
$\eval_\Hy(AJ)$ is nonzero.
We will count and classify all nonzero triple labelings~$J$ and show that 
all $\eval_\Hy(AJ)$ contribute the same coefficient with respect to the 
monomial~$\omegaXM$.
This implies that the coefficient of~$\omegaXM$ in $f_\Hy(A\MaMu_m)$
is a sum without cancellations and hence is nonzero.

\subsection{Separate Analysis of the Three Layers}
We fix a nonzero triple labeling $J\colon V(\Hy) \to \Tset$ 
and write $J=(J^{(1)},J^{(2)},J^{(3)})$.
Recall that the hyperedge $e^{(k)}$ has size $2\kappa+1 = m^2$.
Since $J$ is nonzero, 
$J^{(k)}$ is injective on hyperedges and therefore
$|\{J^{(k)}(\yv) : \yv \in e^{(k)}\}| = m^2$.
Hence $J^{(k)}$ is bijective on $e^{(k)}$.

\begin{claim}
\label{cla:labelinV}
For all $\yv \in V^{(k)}$ we have
$J^{(k)}(\yv) = |i \bar i\ket$ for some $1 \leq i \leq m$.
\end{claim}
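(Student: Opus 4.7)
The plan: the claim is a direct consequence of the special sparsity of the first row of each matrix $A^{(k)}$ combined with the fact that every vertex of $V^{(k)}$ lies in a singleton hyperedge of $E^{(k)}$.

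First, I would observe that for each $k$, the partition $E^{(k)}$ consists of the big hyperedge $e^{(k)} = V^{(k+1)}\cup V^{(k+2)}\cup\{y^0\}$ together with the singletons $\{\yv\}$ for $\yv \in V^{(k)}$. Thus, if $\yv \in V^{(k)}$, then in the product
\[
\eval_{E^{(k)}}(AJ^{(k)}) = \prod_{e\in E^{(k)}} \det (A^{(k)}J^{(k)})|_e
\]
the factor corresponding to $\yv$ is the $1\times 1$ determinant $\det(A^{(k)}J^{(k)})|_{\{\yv\}}=\bra 1|A^{(k)}J^{(k)}(\yv)\ket$, i.e.\ the top entry of the vector $A^{(k)}J^{(k)}(\yv)\in\IC^{m^2}$.

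Next, I would read off the top row of $A^{(k)}$ from \eqref{eq:matrixtriple}: the only columns with a nonzero top entry are the column $|aa\ket$ (top entry $\omegaX_a^{(k)}$) and the columns $|i\bar\imath\ket$ for $i\ne a$ (top entry $\omegaX_i^{(k)}$). All other columns $|ij\ket$ with $j\ne\bar\imath$ have a zero top entry. Since $\yv$ is labeled by some triple $\tri{ij\elll}$ from $\Tset$, the vector $J^{(k)}(\yv)$ is one of the basis vectors $|ij\ket$, $|j\elll\ket$, or $|\elll i\ket$, and in any case is of the form $|rs\ket$ with $r,s\in[m]$.

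Now I would combine these two observations. If the second coordinate of $J^{(k)}(\yv)=|rs\ket$ satisfies $s\ne\bar r$, then $\bra 1|A^{(k)}J^{(k)}(\yv)\ket = 0$ identically in $\Gamma$, so the entire product $\eval_\Hy(AJ) = \prod_k \eval_{E^{(k)}}(AJ^{(k)})$ is the zero polynomial; in particular its coefficient of $\omegaXM$ is zero, contradicting the assumption that $J$ is nonzero. Therefore we must have $s=\bar r$, i.e.\ $J^{(k)}(\yv) = |i\bar\imath\ket$ for some $i\in[m]$ (the case $i=a$ being allowed and giving $|aa\ket$).

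The argument is entirely bookkeeping; no step is really hard. The only thing to be careful about is the logical direction: ``nonzero'' here means that a specific monomial coefficient of the polynomial $\eval_\Hy(AJ)\in\Gamma$ is nonzero, and this indeed forces $\eval_\Hy(AJ)\not\equiv0$, which is all that is needed to rule out the vanishing of the singleton factor above.
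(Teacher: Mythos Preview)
Your proposal is correct and follows essentially the same argument as the paper: the singleton hyperedge $\{\yv\}\in E^{(k)}$ contributes the factor $\bra 1|A^{(k)}J^{(k)}(\yv)\ket$, and the sparsity of the first row of $A^{(k)}$ forces $J^{(k)}(\yv)=|i\bar\imath\ket$. The paper's proof is just a terser version of what you wrote.
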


\begin{proof}
Since $\{y\} \in E^{(k)}$ and $J$ is nonzero, 
we have $\bra 1 | A^{(k)} |J^{(k)}(y)\ket \neq 0$.
From the definition of~$A$ it follows that $J^{(k)}(y) = |ij\ket$ and the third case $j \neq \bar i$
is excluded. Hence $j = \bar i$.
\end{proof}

\begin{claim}
\label{cla:centerlabel}
We have
$J(\yv^0) = (|aa\ket,|aa\ket,|aa\ket)$.
\end{claim}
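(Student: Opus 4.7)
The strategy is to exclude $|aa\ket$ as a value of $J^{(k)}$ on every vertex of $e^{(k)}$ other than $y^0$, and then use bijectivity of $J^{(k)}|_{e^{(k)}}$ to force $J^{(k)}(y^0) = |aa\ket$ for each $k$.

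First I would pin down $\eval_\Hy(AJ)$ as a single monomial. Because $J$ is nonzero, $J^{(k)}|_{e^{(k)}}$ is a bijection onto the standard basis of $\IC^{m^2}$, so $(AJ)^{(k)}|_{e^{(k)}}$ equals $A^{(k)}$ up to a column permutation and $\eval_{e^{(k)}}(AJ) = \pm \det A^{(k)}$. From~\eqref{eq:matrixtriple}, $A^{(k)}$ is block diagonal: an $\id_{m^2-m}$ block on one side, and an $m\times m$ block whose $(1,1)$-entry is $X_a^{(k)}$, whose other first-column entries vanish, and whose lower-right $(m-1)\times(m-1)$ sub-block is a permutation matrix. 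Expansion along the first column yields $\det A^{(k)} = \pm X_a^{(k)}$. Each singleton $\{y\}\in E^{(k)}$ with $y\in V^{(k)}$ contributes $\eval_{\{y\}}(AJ) = \bra 1 | A^{(k)} | J^{(k)}(y)\ket = X_{i_y}^{(k)}$, where $J^{(k)}(y) = |i_y \bar{i_y}\ket$ via Claim~\ref{cla:labelinV} (with the convention that $X_a^{(k)}$ appears when $i_y = a$). Multiplying,
\[
 \eval_\Hy(AJ) \;=\; \pm \prod_{k=1}^{3} X_a^{(k)} \prod_{y\in V^{(k)}} X_{i_y}^{(k)},
\]
a single monomial. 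For its coefficient of $\omegaXM$ (see~\eqref{eq:omegamonomial}) to be nonzero, the two monomials must coincide; but $\omegaXM$ has $X_a^{(k)}$-exponent $1$ while the displayed product has exponent $1 + |\{y\in V^{(k)} : i_y = a\}|$, so $i_y \ne a$ for every $y\in V^{(k)}$.

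Next I would use this exclusion to rule out $J^{(k)}(y) = |aa\ket$ for $y \in V^{(k+1)} \cup V^{(k+2)} \subseteq e^{(k)}$. Fix $y \in V^{(k+1)}$ and write $J(y) = \tri{ij\elll} = (|ij\ket, |j\elll\ket, |\elll i\ket) \in \Tset$. Claim~\ref{cla:labelinV} at index $k+1$, combined with the preceding paragraph, forces the $(k+1)$-th component of this triple to equal $|p \bar p\ket$ for some $p \ne a$. Since the $k$-th and $(k+1)$-th components of $\tri{ij\elll}$ always share an index (the second index of the $k$-th equals the first index of the $(k+1)$-th), $J^{(k)}(y)$ has an index equal to $p \ne a$ and hence $J^{(k)}(y) \ne |aa\ket$. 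Symmetric reasoning for $y \in V^{(k+2)}$ uses that the $k$-th and $(k+2)$-th components share an index on the opposite side, together with $\bar a = a$ (so $\bar p \ne a$ when $p \ne a$), yielding $J^{(k)}(y) \ne |aa\ket$ there as well. Since $J^{(k)}|_{e^{(k)}}$ is a bijection onto all $m^2$ basis vectors $|ij\ket$, the value $|aa\ket$ must be attained on $e^{(k)}$, and the only remaining candidate is $y^0$. Hence $J^{(k)}(y^0) = |aa\ket$ for every $k$, i.e.\ $J(y^0) = (|aa\ket, |aa\ket, |aa\ket)$.

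The main technical care is the cyclic index bookkeeping of the second step: for each $k \in \{1,2,3\}$ and each of $V^{(k+1)}, V^{(k+2)}$ one must verify which specific index of $J^{(k)}(y)$ is pinned down by Claim~\ref{cla:labelinV} applied to the appropriate layer. Everything else---the block-diagonal computation of $\det A^{(k)}$, the resulting single-monomial shape of $\eval_\Hy(AJ)$, and the concluding bijection-chase on $e^{(k)}$---is routine once the setup is in place.
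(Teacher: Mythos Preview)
Your proof is correct and rests on the same ingredients as the paper's---the degree-one occurrence of $X_a^{(k)}$ in $\omegaXM$, Claim~\ref{cla:labelinV}, the cyclic index-sharing in the triples $\tri{ij\elll}$, and bijectivity of $J^{(k)}$ on $e^{(k)}$---but the organization differs. You front-load the computation that $\eval_\Hy(AJ)$ is a single monomial $\pm\prod_k X_a^{(k)}\prod_{y\in V^{(k)}}X_{i_y}^{(k)}$, which immediately yields the strengthening $i_y\ne a$ of Claim~\ref{cla:labelinV}; from there the exclusion of $|aa\ket$ on $V^{(k+1)}\cup V^{(k+2)}$ is a direct index chase, and bijectivity finishes. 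The paper instead names $y_k$ as the unique preimage of $|aa\ket$ under $J^{(k)}$, writes $J(y_1)=(|aa\ket,|ai\ket,|ia\ket)$, case-splits on $i=a$ versus $i\ne a$, and in the latter case derives a contradiction from Claim~\ref{cla:labelinV} when $y_1\ne y^0$, then repeats for $y_2,y_3$. Your forward argument is a bit cleaner and avoids the case split; the paper's version, on the other hand, does not need the explicit block computation of $\det A^{(k)}$ (it only uses that each occurrence of $|aa\ket$ forces a factor $X_a^{(k)}$).
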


\begin{proof}
For the following argument it is important to keep the structure of the matrix $A^{(k)}$ in mind,
cf.~\eqref{eq:matrixtriple}.
Recall that $f_\Hy(A \MaMu_m)$ is a sum of products of certain subdeterminants of~$A^{(k)}$
that are determined by the hyperedges in $E^{(k)}(\Hy)$.
The coefficient of~$\omegaXM$ in $\eval_\Hy(AJ(1),\ldots,AJ(d))$ is nonzero
as $J$ is nonzero. Fix~$k$. 
Since the degree of $\omegaX_a^{(k)}$ in~$\omegaXM$ is one, 
there is exactly one vertex $\yv_k \in V(\Hy)$ with $J^{(k)}(y) = |aa\ket$.
But we know that $J^{(k)}$ bijective on $e^{(k)}$, so $\yv_k \in e^{(k)}$.

It is now sufficient to show that $\yv_1=\yv_2=\yv_3$
(since $e^{(1)}\cap e^{(2)}\cap e^{(3)} = \{\yv^0\}$). 


The structure of the matrix multiplication tensor implies
that $J(\yv_1) = (|aa\ket,|ai\ket,|ia\ket)$ for some $1 \leq i \leq m$.

In the case $a=i$, by definition of $\yv_2$ and $\yv_3$ and uniqueness, we have
$\yv_1=\yv_2=\yv_3$ and we are done.

So consider the case where $a \neq i$.
If $\yv_1 \neq y^0$ we may assume 
w.l.o.g.\ $\yv_1 \in V^{(3)}$.
Using Claim~\ref{cla:labelinV} we conclude that
$J^{(3)}(\yv_1) = |i\bar i\ket$ for some $1 \leq i \leq m$.
Hence $\bar i = a$ contradicting $i \neq a$.
So we must have $\yv_1 = \yv^0$.

Similarly, we show that $\yv_2 = \yv_3 = \yv^0$
and the assertion follows. 
\end{proof}

\begin{claim}
\label{cla:mappinglabel}
We have
$J^{(k)}(V^{(k)}) = \{ |i\bar i\ket \mid 1 \leq i \leq m \}\setminus\{|aa\ket\}$,
 where the preimage of each $|i\bar i\ket$ under~$J^{(k)}$ has size $|i-\bar i|$.
\end{claim}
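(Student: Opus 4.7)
The plan is to pin down $\eval_\Hy(AJ)$ as a single signed monomial in the $X^{(k)}_i$ and then match exponents against $\omegaXM$.

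First I would factor $\eval_\Hy(AJ) = \prod_k \prod_{e\in E^{(k)}} \det(A^{(k)}J^{(k)})|_e$ over the three layers. Since $E^{(k)}$ consists of the hyperedge $e^{(k)}$ together with $\kappa$ singletons on $V^{(k)}$, and a $1\times 1$ determinant on a singleton $\{y\}$ is just the first-row entry of the column $J^{(k)}(y)$, this gives
\[
\eval_\Hy(AJ) = \prod_{k=1}^3 \Bigl(\det(A^{(k)}J^{(k)})|_{e^{(k)}} \cdot \prod_{y\in V^{(k)}} \bra 1|A^{(k)} J^{(k)}(y)\ket\Bigr).
\]

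Next I would compute the hyperedge factor. Because $J$ is nonzero, $J^{(k)}$ is bijective on $e^{(k)}$, so the hyperedge determinant equals $\pm\det A^{(k)}$ (a sign picked up from permuting columns). The crucial structural feature of $A^{(k)}$ in~\eqref{eq:matrixtriple} is that the $|aa\ket$\dash column is $X_a^{(k)}|1\ket$, i.e.\ supported only in row~1, while all variables $X_i^{(k)}$ likewise live only in row~1. Expanding $\det A^{(k)}$ along the $|aa\ket$\dash column therefore leaves $X_a^{(k)}$ times the minor obtained by deleting row~1 and that column, and a direct inspection of~\eqref{eq:matrixtriple} shows this minor is a signed permutation matrix. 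Hence $\det(A^{(k)}J^{(k)})|_{e^{(k)}} = \pm X_a^{(k)}$.

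Then I would evaluate each singleton factor. For $y\in V^{(k)}$, Claim~\ref{cla:labelinV} forces $J^{(k)}(y) = |i\bar i\ket$ for some~$i$, and Claim~\ref{cla:centerlabel} together with $y\ne y^0$ rules out $i=a$. Reading off row~1 of~\eqref{eq:matrixtriple} at the column $|i\bar i\ket$ then yields $\bra 1|A^{(k)} J^{(k)}(y)\ket = X_{i(y)}^{(k)}$ with $i(y)\ne a$, where $J^{(k)}(y) = |i(y)\,\overline{i(y)}\ket$.

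Combining these, $\eval_\Hy(AJ)$ is a single signed monomial $\pm \prod_{k=1}^3\bigl(X_a^{(k)} \prod_{y\in V^{(k)}} X_{i(y)}^{(k)}\bigr)$. For this to equal $\omegaXM$ I need, in each layer~$k$ and for every $i\ne a$, the equality $\#\{y\in V^{(k)}: i(y)=i\} = |i-\bar i|$. Since $|a-\bar a|=0$ and $\sum_{i=1}^m|i-\bar i|=\kappa=|V^{(k)}|$, the exponents add up correctly, the image of $J^{(k)}|_{V^{(k)}}$ is forced to be exactly $\{|i\bar i\ket : 1\le i\le m\}\setminus\{|aa\ket\}$, and each $|i\bar i\ket$ is hit $|i-\bar i|$ times. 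The only non-routine ingredient is the determinant computation, which the clever design of~\eqref{eq:matrixtriple} --- concentrating all variables in row~1 and isolating $X_a^{(k)}$ in a column supported only there --- reduces to a one-line cofactor expansion; the rest is bookkeeping of exponents.
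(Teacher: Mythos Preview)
Your approach is essentially the paper's: factor $\eval_\Hy(AJ)$ into the big hyperedge determinant and the $\kappa$ singleton first-row entries, observe that expanding along the $|aa\ket$-column makes the hyperedge determinant a scalar times $X_a^{(k)}$ with no other variables appearing, and then match exponents against $\omegaXM$. Your version is a bit more explicit in actually computing $\det A^{(k)}=\pm X_a^{(k)}$ via the cofactor, whereas the paper just notes that the determinant is a multiple of $X_a^{(k)}$ and that no $X_i^{(k)}$ with $i\ne a$ survives once row~1 is removed.

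One small slip: invoking Claim~\ref{cla:centerlabel} together with $y\ne y^0$ does \emph{not} by itself rule out $i(y)=a$. Claim~\ref{cla:centerlabel} only asserts $J(y^0)=(|aa\ket,|aa\ket,|aa\ket)$; it does not assert that $y^0$ is the \emph{unique} vertex with $J^{(k)}$-label $|aa\ket$, and $J^{(k)}$ is known to be injective only on $e^{(k)}$, which is disjoint from $V^{(k)}$. This is harmless, however, because your final exponent-matching step already forces $i(y)\ne a$: the hyperedge factor contributes one $X_a^{(k)}$, and since $|a-\bar a|=0$ the monomial $\omegaXM$ has $X_a^{(k)}$ to exactly the first power, so no singleton may contribute another. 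Simply drop the appeal to Claim~\ref{cla:centerlabel} at that point and let the monomial comparison (which you carry out anyway) do the work.
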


\begin{proof}
According to Claim~\ref{cla:centerlabel} we have 
$J(\yv^0) = (|aa\ket,|aa\ket,|aa\ket)$.
Since $A^{(k)}|aa\ket$ is a multiple of $|1\ket$,
$\eval_{e^{(k)}}(J^{(k)})$ is a multiple of~$\omegaX_a^{(k)}$, 
cf.~\eqref{eq:matrixtriple}.
Moreover, for $i \neq a$, the variable $\omegaX_i^{(k)}$ does not appear in
the expansion of~$\eval_{e^{(k)}}(J^{(k)})$.
Since there are $\kappa = \sum_{i=1}^m|i-\bar i|$ many
contributions of a factor $\omegaX_i^{(k)}$ in the monomial $\omegaXM$,
these factors must be contributed at vertices in~$V^{(k)}$.
Moreover $|V^{(k)}|=\kappa$, so the only possibility is that all $\yv \in V^{(k)}$ satisfy
$J^{(k)}(\yv) = |i \bar i\ket$ for some $1 \leq i \leq m$, $i \neq a$.
The specific requirement for the number of factors $\omegaX_i^{(k)}$
which are encoded in~$\omegaXM$ in~\eqref{eq:omegamonomial} finishes the proof.
\end{proof}

\subsection{Coupling the Analysis of the Three Layers}

It will be convenient to identify the sets $J^{(k)}(V^{(k')})$ 
with their corresponding subsets of $\Tup$. 

Consider the bijective map
$
  \tau\colon \Tup \to \Tup, \ \tau(ij) = (j\bar i),
$
which corresponds to the rotation by~$90^\circ$.
Clearly, $\tau^4 = \id$.
The map $\tau$ induces a map $\wp(\Tup)\to\wp(\Tup)$ on the powerset,
which we also denote by~$\tau$.

Taking the complement defines the involution 
$
  \iota\colon \wp(\Tup) \to \wp(\Tup), \ S \mapsto \Tup \setminus S.
$
Clearly, we have $\tau \circ \iota = \iota \circ \tau$.
We will only be interested in subsets $S \subseteq \Tup$ with exactly $|\Tup|/2 = \kappa$ many elements
and their images under $\tau$ and~$\iota$.
The subsets $S\subseteq \Tup$ that satisfy $\iota(S) = \tau(S)$ will be of special interest.
Geometrically, these are the sets that get inverted when rotating by~$90^\circ$.

In Claim~\ref{cla:mappinglabel} we analyzed the labels $J^{(k)}(V^{k})$.
In the next claim we turn to $J^{(k)}(V^{k'})$, where $k \neq k'$.

\begin{claim}
\label{cla:tauiota}
Every nonzero triple labeling~$J$ is completely determined by the image $J^{(1)}(V^{(3)})$
(up to permutations in the $V^{(k)}$, see Claim~\ref{cla:invarineachVk}) as follows.
\begin{compactitem}
 \item $J^{(2)}(V^{(3)}) = \tau(J^{(1)}(V^{(3)}))$,
 \item $J^{(2)}(V^{(1)}) = \iota(J^{(2)}(V^{(3)}))$,
 \item $J^{(3)}(V^{(1)}) = \tau(J^{(2)}(V^{(1)}))$,
 \item $J^{(3)}(V^{(2)}) = \iota(J^{(3)}(V^{(1)}))$,
 \item $J^{(1)}(V^{(2)}) = \tau(J^{(3)}(V^{(2)}))$.
\end{compactitem}
Moreover, $\tau(J^{(1)}(V^{(3)})) = \iota(J^{(1)}(V^{(3)}))$.
\end{claim}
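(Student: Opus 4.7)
The plan is to prove the five bulleted relations in two batches --- the three $\tau$-relations (items 1, 3, 5) and the two $\iota$-relations (items 2, 4) --- and then derive the ``moreover'' statement by composing them.

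For the $\tau$-relations, fix $k$ and $y \in V^{(k)}$. Claims~\ref{cla:labelinV} and~\ref{cla:mappinglabel} force $J^{(k)}(y) = |i\bar i\ket$ for some $i \ne a$. Writing $J(y) = t_{pqr} = (|pq\ket, |qr\ket, |rp\ket)$ as in \eqref{eq:mamutriples}, the equation $J^{(k)}(y) = |i\bar i\ket$ identifies two of the indices $p,q,r$ with $i$ and $\bar i$ and leaves the third free. In each of the cases $k = 1, 2, 3$, a one-line substitution (using $\bar{\bar i} = i$) yields $J^{(k+2)}(y) = \tau(J^{(k+1)}(y))$; passing to images over $y \in V^{(k)}$ gives items 3, 5, and 1, respectively.

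For the $\iota$-relations, fix $k$ and examine the coefficient of $\omegaXM$ in the factor $\eval_{e^{(k)}}(AJ)$. By Claims~\ref{cla:centerlabel} and~\ref{cla:mappinglabel}, the full contribution $\omegaX_a^{(k)} \prod_i (\omegaX_i^{(k)})^{|i-\bar i|}$ to $\omegaXM$ coming from the $k$-th coordinate is already accounted for by the unique $\omegaX_a^{(k)}$ produced at $y^0$ and by the singletons on $V^{(k)}$, so the Leibniz expansion of $\det J^{(k)}|_{e^{(k)}}$ must contribute exactly $\omegaX_a^{(k)}$ with no further $\omegaX^{(k)}$-factor. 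Inspecting the three cases of \eqref{eq:matrixtriple}, this forces the column at $y^0$ to be paired with row $1$, rules out $J^{(k)}(y) = |aa\ket$ for $y \in V^{(k+1)} \cup V^{(k+2)}$, and rules out pairing any column of the form $|i\bar i\ket$ ($i \ne a$) with row $1$. The effective column at each such $y$ is therefore the basis vector $|\varphi(J^{(k)}(y))\ket$ with $\varphi(J^{(k)}(y)) \geq 2$, and a nonzero Leibniz term exists only if $J^{(k)}|_{V^{(k+1)} \cup V^{(k+2)}}$ is a bijection onto $\Tup$. Hence $J^{(k)}(V^{(k+1)})$ and $J^{(k)}(V^{(k+2)})$ are complements in $\Tup$, giving items 2 ($k = 2$) and 4 ($k = 3$).

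Setting $S := J^{(1)}(V^{(3)})$ and applying items 1--5 in order, while using $\tau\iota = \iota\tau$, one computes $J^{(1)}(V^{(2)}) = \tau^3(S)$. But the argument of the preceding paragraph with $k = 1$ also gives $J^{(1)}(V^{(2)}) = \iota(S)$, so $\tau^3(S) = \iota(S)$; applying $\tau$ to both sides and using $\tau^4 = \id$ together with $\tau\iota = \iota\tau$ rewrites this as $\tau(S) = \iota(S)$, which is the ``moreover'' assertion. The main obstacle lies in the $\iota$-argument: one must carefully rule out every spurious source of an $\omegaX^{(k)}_i$-factor in the Leibniz expansion so that $\varphi$ unambiguously identifies $J^{(k)}|_{V^{(k+1)} \cup V^{(k+2)}}$ with a permutation of $\{2,\ldots,m^2\}$. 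Everything else is bookkeeping with the commuting involutions $\iota$ and $\tau$.
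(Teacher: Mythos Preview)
Your proof is correct and follows the same approach as the paper. Your Leibniz-expansion argument for the $\iota$-relations is more elaborate than needed: the paper simply cites the already-established bijectivity of $J^{(k)}$ on $e^{(k)}$ (from $|e^{(k)}|=m^2$ and the fact that nonzero $J$ forces injectivity on every hyperedge) together with $J^{(k)}(y^0)=|aa\ket$ to conclude directly that $J^{(k)}(V^{(k+1)})$ and $J^{(k)}(V^{(k+2)})$ are complementary in~$\Tup$.
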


\begin{proof}
According to Claim~\ref{cla:mappinglabel}, 
each vertex $\yv \in V^{(3)}$ satisfies
\begin{align*}
J(\yv) = \big(|ij\ket,|\tau (ij)\ket,|\bar i i\ket\big) 
\end{align*}
for some $1 \leq i,j \leq m$, $i \neq a$.
In particular, 
\begin{align*}
 \tau(J^{(1)}(V^{(3)})) = J^{(2)}(V^{(3)}).
\end{align*}
Recall that 
$J^{(2)}$ is bijective on~$e^{(2)}$.
Using $e^{(2)} = V^{(1)} \dotcup V^{(3)} \dotcup \{\yv^0\}$ we see that
\[
J^{(2)}(V^{(1)}) = \Tup \setminus J^{(2)}(V^{(3)}) = \iota(J^{(2)}(V^{(3)})).
\]
For the same reason, we can deduce
$J^{(3)}(V^{(1)}) = \tau(J^{(2)}(V^{(1)}))$ and
$J^{(3)}(V^{(2)}) = \iota(J^{(3)}(V^{(1)}))$.
And applying these arguments one more time we get
$J^{(1)}(V^{(2)}) = \tau(J^{(3)}(V^{(2)}))$
and
$J^{(1)}(V^{(3)}) = \tau(J^{(1)}(V^{(2)}))$.
Summarizing (recall $\tau \circ \iota = \iota \circ \tau$) we have
\[J^{(1)}(V^{(3)}) = \tau^3\iota^3(J^{(1)}(V^{(3)})) = \tau^{-1}\iota(J^{(1)}(V^{(3)})),\]
which is equivalent to $\tau(J^{(1)}(V^{(3)})) = \iota(J^{(1)}(V^{(3)}))$.
\end{proof}

%
\begin{definition}
\label{def:validset}
 A subset $\Tups \subseteq \Tup$ is called \emph{valid},
 if
 \begin{enumerate}[(1)]
  \item $|\Tups|=\frac{m^2-1}2 = \kappa$,\label{def:validset:I}
  \item $\tau(\Tups) = \iota(\Tups)$,\label{def:validset:II}
  \item $|p^{-1}(i)| = |i-\bar i|$ for all $1 \leq i \leq m$\label{def:validset:III}
 \end{enumerate}
where $p\colon\Tups \to \{1,\ldots,m\}$ is the projection to the first component.
\end{definition}

\begin{proposition}
\label{pro:nonzerovalid}
$J^{(1)}(V^{(3)})$ is a valid set for all nonzero triple labelings~$J$.
On the other hand, for every valid set~$\Tups$ there exists exactly one nonzero triple labeling~$J$
 with $J^{(1)}(V^{(3)}) = \Tups$, up to permutations in the~$V^{(k)}$.
\end{proposition}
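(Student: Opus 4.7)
The proof will split naturally into a necessity direction (every nonzero $J$ yields a valid $\Tups := J^{(1)}(V^{(3)})$) and an existence-with-uniqueness direction (every valid $\Tups$ arises from exactly one nonzero $J$, up to permutations in the $V^{(k)}$). For necessity, I reorganize the preceding claims: since $J$ is nonzero, $J^{(1)}$ is bijective on $e^{(1)}=V^{(2)}\dotcup V^{(3)}\dotcup\{\yv^0\}$, and Claim~\ref{cla:centerlabel} gives $J^{(1)}(\yv^0)=|aa\ket$; hence $J^{(1)}|_{V^{(3)}}$ is an injection into $\Tup$ of size $\kappa$, establishing the size condition. The equality $\tau(\Tups)=\iota(\Tups)$ is precisely the last line of Claim~\ref{cla:tauiota}. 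For the marginal condition $|p^{-1}(i)|=|i-\bar i|$, I combine the explicit description $J(\yv)=(|ij\ket,|j\bar i\ket,|\bar i\, i\ket)$ for $\yv\in V^{(3)}$, extracted from the proof of Claim~\ref{cla:tauiota}, with Claim~\ref{cla:mappinglabel} applied to $J^{(3)}$: the value $|\bar i\, i\ket=|l\bar l\ket$ with $l=\bar i$ is attained $|l-\bar l|=|i-\bar i|$ times on $V^{(3)}$, matching the number of elements of $\Tups$ with first component~$i$.

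For sufficiency, given a valid $\Tups$, I first observe that the identity $\tau(\Tups)=\iota(\Tups)$ together with $\tau\iota=\iota\tau$ and $\iota^2=\id$ implies $\tau^2(\Tups)=\Tups$. Iterating the formulas of Claim~\ref{cla:tauiota} then shows that each of the six sets $J^{(k)}(V^{(k')})$ prescribed there collapses to either $\Tups$ or $\iota(\Tups)$, and within each layer~$k$ the two sets $J^{(k)}(V^{(k+1)})$ and $J^{(k)}(V^{(k+2)})$ partition~$\Tup$. Together with $J^{(k)}(\yv^0)=|aa\ket$ this yields a bijection $J^{(k)}\colon e^{(k)}\to\Tup\cup\{|aa\ket\}$ in each layer. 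To define $J$ vertex by vertex I use that inside $\Tset=\{t_{ij\elll}\}$ any one component of a triple determines the other two: for $\yv\in V^{(3)}$ with $J^{(1)}(\yv)=|ij\ket\in\Tups$ the only triple extending it is $(|ij\ket,|j\bar i\ket,|\bar i\, i\ket)$, and symmetric rules apply on $V^{(1)}$ and $V^{(2)}$. The marginal condition forces all first components of elements of $\Tups$ to be different from~$a$, so the singleton labels on $V^{(k)}$ are all of the form $|i\bar i\ket$ with $i\ne a$, as required by Claim~\ref{cla:labelinV}. The only remaining freedom is the bijective matching of each $V^{(k')}$ with its prescribed image set, which is precisely the permutation freedom of Claim~\ref{cla:invarineachVk}.

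It remains to verify that the $J$ constructed this way is nonzero. The hyperedge $e^{(k)}$ contributes $\eval_{e^{(k)}}(AJ^{(k)})=\pm\det A^{(k)}=\pm\omegaX_a^{(k)}$: bijectivity of $J^{(k)}$ on $e^{(k)}$ makes the submatrix a column permutation of $A^{(k)}$, and expanding $\det A^{(k)}$ along the top row of the block structure~\eqref{eq:matrixtriple} leaves only the $\omegaX_a^{(k)}$ term, since any other top-row variable $\omegaX_j^{(k)}$ produces a cofactor in which the $|aa\ket$ column is identically zero. Each singleton $\{\yv\}\in E^{(k)}$ with $\yv\in V^{(k)}$ contributes $\bra 1|A^{(k)}|J^{(k)}(\yv)\ket=\omegaX_{i(\yv)}^{(k)}$, where $J^{(k)}(\yv)=|i(\yv)\,\overline{i(\yv)}\ket$, and the marginal condition makes these factors multiply out per layer to $\prod_i(\omegaX_i^{(k)})^{|i-\bar i|}$. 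Hence $\eval_{\Hy_\kappa}(AJ)=\pm\omegaXM$, which is nonzero. The main obstacle I expect is the bookkeeping in the sufficiency direction: fluidly translating between the three pictures (the six image sets prescribed by Claim~\ref{cla:tauiota}, the vertex-by-vertex triple labeling determined by the structure of $\Tset$, and the matrix-entry contributions governed by~\eqref{eq:matrixtriple}) and verifying that every consistency condition lines up correctly.
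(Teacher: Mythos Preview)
Your approach is the same as the paper's---invoke Claims~\ref{cla:centerlabel}, \ref{cla:mappinglabel}, and~\ref{cla:tauiota} for necessity, and reverse-engineer $J$ from $\Tups$ via the same claims for sufficiency---and your verification that the constructed $J$ is actually nonzero is considerably more explicit than the paper's ``readily checked.''

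There is one genuine slip. You assert that ``inside $\Tset=\{t_{ij\elll}\}$ any one component of a triple determines the other two,'' and conclude that for $y\in V^{(3)}$ with $J^{(1)}(y)=|ij\ket$ the only extension is $t_{ij\bar i}$. This is false as stated: $t_{ij\elll}$ has first component $|ij\ket$ for \emph{every} $\elll$, so one component never determines the other two. What actually pins down $\elll=\bar i$ is the additional requirement, coming from Claim~\ref{cla:labelinV} (which any nonzero $J$ must satisfy), that $J^{(3)}(y)=|\elll i\ket$ be of the form $|l\bar l\ket$; this forces $i=\bar\elll$, i.e., $\elll=\bar i$. The symmetric determinations on $V^{(1)}$ and $V^{(2)}$ need the same extra input. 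Once you insert this constraint, both the construction of $J$ and its uniqueness up to permutations in the $V^{(k)}$ go through exactly as you describe, and your computation $\eval_{\Hy_\kappa}(AJ)=\pm\omegaXM$ is correct.
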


\begin{proof}
 For the first statement, property (\ref{def:validset:II}) of Def.~\ref{def:validset}
 follows from Claim~\ref{cla:tauiota} and
 property (\ref{def:validset:III}) of Def.~\ref{def:validset} follows from Claim~\ref{cla:mappinglabel}.
 The second statement can be readily checked with Claim~\ref{cla:centerlabel} and Claim~\ref{cla:tauiota}.
\end{proof}
Figure~\fref{fig:combinatorialsets} gives an example for the case $m=9$.
    Vertices that appear in all valid sets are drawn with a solid border.
    Vertices that appear in no valid set are drawn with a dotted border.
    Vertices that appear in half of all valid sets are drawn with a dashed border.
    These contain a vertex label $x_i$ or $\overline{x_i}$.
    Each valid set corresponds to a choice vector $x \in \{\text{true},\text{false}\}^4$
    determining whether the $x_i$ or the $\overline{x_i}$ are contained in~$\Tups$.
    This results in $2^4=16$ valid sets $\Tups \subseteq \Tup$.

\begin{figure}[h]
  \begin{center}
 \begin{tikzpicture}[scale=0.9, x={(0cm,-0.65cm)},y={(0.65cm,0cm)},verte/.style={%
        circle,
        draw=black,
        inner ysep=2pt,
        inner xsep=2pt,
        minimum width = 0.5cm,
        minimum height = 0.5cm}]
        \node at (1,-0.5) {\ensuremath{i}};
        \node at (-0.5,1) {\ensuremath{j}};
        \draw[-latex] (0,0) -- (0,1);
        \draw[-latex] (0,0) -- (1,0);
        \node[verte,dashed] at (1,1) {\tiny$\truthlabel_1$};
        \node[verte] at (1,2) {};
        \node[verte] at (1,3) {};
        \node[verte] at (1,4) {};
        \node[verte] at (1,5) {};
        \node[verte] at (1,6) {};
        \node[verte] at (1,7) {};
        \node[verte] at (1,8) {};
        \node[verte,dashed] at (1,9) {\tiny$\overline{\truthlabel_1}$};
        \node[verte,dotted] at (2,1) {};
        \node[verte,dashed] at (2,2) {\tiny$\truthlabel_2$};
        \node[verte] at (2,3) {};
        \node[verte] at (2,4) {};
        \node[verte] at (2,5) {};
        \node[verte] at (2,6) {};
        \node[verte] at (2,7) {};
        \node[verte,dashed] at (2,8) {\tiny$\overline{\truthlabel_2}$};
        \node[verte,dotted] at (2,9) {};
        \node[verte,dotted] at (3,1) {};
        \node[verte,dotted] at (3,2) {};
        \node[verte,dashed] at (3,3) {\tiny$\truthlabel_3$};
        \node[verte] at (3,4) {};
        \node[verte] at (3,5) {};
        \node[verte] at (3,6) {};
        \node[verte,dashed] at (3,7) {\tiny$\overline{\truthlabel_3}$};
        \node[verte,dotted] at (3,8) {};
        \node[verte,dotted] at (3,9) {};
        \node[verte,dotted] at (4,1) {};
        \node[verte,dotted] at (4,2) {};
        \node[verte,dotted] at (4,3) {};
        \node[verte,dashed] at (4,4) {\tiny$\truthlabel_4$};
        \node[verte] at (4,5) {};
        \node[verte,dashed] at (4,6) {\tiny$\overline{\truthlabel_4}$};
        \node[verte,dotted] at (4,7) {};
        \node[verte,dotted] at (4,8) {};
        \node[verte,dotted] at (4,9) {};
        \node[verte,dotted] at (5,1) {};
        \node[verte,dotted] at (5,2) {};
        \node[verte,dotted] at (5,3) {};
        \node[verte,dotted] at (5,4) {};
        \node[verte,dotted] at (5,6) {};
        \node[verte,dotted] at (5,7) {};
        \node[verte,dotted] at (5,8) {};
        \node[verte,dotted] at (5,9) {};
        \node[verte,dotted] at (6,1) {};
        \node[verte,dotted] at (6,2) {};
        \node[verte,dotted] at (6,3) {};
        \node[verte,dashed] at (6,4) {\tiny$\overline{\truthlabel_4}$};
        \node[verte] at (6,5) {};
        \node[verte,dashed] at (6,6) {\tiny$\truthlabel_4$};
        \node[verte,dotted] at (6,7) {};
        \node[verte,dotted] at (6,8) {};
        \node[verte,dotted] at (6,9) {};
        \node[verte,dotted] at (7,1) {};
        \node[verte,dotted] at (7,2) {};
        \node[verte,dashed] at (7,3) {\tiny$\overline{\truthlabel_3}$};
        \node[verte] at (7,4) {};
        \node[verte] at (7,5) {};
        \node[verte] at (7,6) {};
        \node[verte,dashed] at (7,7) {\tiny$\truthlabel_3$};
        \node[verte,dotted] at (7,8) {};
        \node[verte,dotted] at (7,9) {};
        \node[verte,dotted] at (8,1) {};
        \node[verte,dashed] at (8,2) {\tiny$\overline{\truthlabel_2}$};
        \node[verte] at (8,3) {};
        \node[verte] at (8,4) {};
        \node[verte] at (8,5) {};
        \node[verte] at (8,6) {};
        \node[verte] at (8,7) {};
        \node[verte,dashed] at (8,8) {\tiny$\truthlabel_2$};
        \node[verte,dotted] at (8,9) {};
        \node[verte,dashed] at (9,1) {\tiny$\overline{\truthlabel_1}$};
        \node[verte] at (9,2) {};
        \node[verte] at (9,3) {};
        \node[verte] at (9,4) {};
        \node[verte] at (9,5) {};
        \node[verte] at (9,6) {};
        \node[verte] at (9,7) {};
        \node[verte] at (9,8) {};
        \node[verte,dashed] at (9,9) {\tiny$\truthlabel_1$};
  \end{tikzpicture}
  \vspace{-0.2cm}
    \caption{The case $n=9$.}
    \nopar\label{fig:combinatorialsets}
  \end{center}
  \vspace{-0.4cm}
\end{figure}
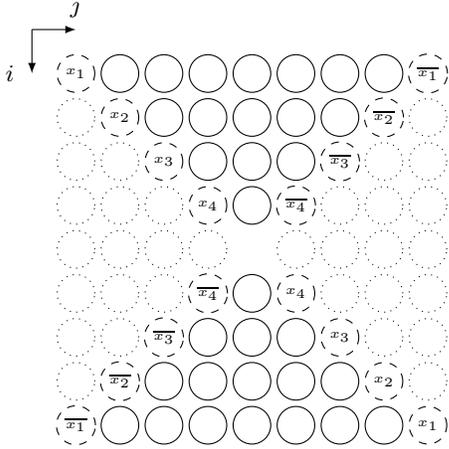

The next claim classifies all valid sets.
\begin{lemma}
\label{lem:classifyvalid}
A set $\Tups \subseteq \Tup$ is valid iff the following conditions are all satisfied (see Figure~\ref{fig:combinatorialsets} for an illustration):
 \begin{enumerate}[(1)]
  \item $\Big\{(ij) \mid (i<j \textup{ and } i<\bar j) \textup{ or } (i>j \textup{ and } i>\bar j)\Big\} \subseteq \Tups$,
  represented by solid vertices in Figure~\ref{fig:combinatorialsets}.
  \item $\Big\{(ij) \mid (i>j \textup{ and } i<\bar j) \textup{ or } (i<j \textup{ and } i>\bar j)\Big\} \cap S = \emptyset$,
  represented by dotted vertices in Figure~\ref{fig:combinatorialsets}.
  \item For all $1 \leq i \leq \frac{m-1}2$ there are two mutually exclusive cases, (a) and (b),
  represented by the two vertices $\truthlabel_i$ and the two vertices $\overline{\truthlabel_i}$,
  respectively, in Figure~\ref{fig:combinatorialsets}.
  \begin{samepage}
  \begin{enumerate}[(a)]
   \item $\{(ii),(\bar i\bar i)\} \subseteq \Tups$ \ and \ $\{(i\bar i),(\bar i i)\}\cap\Tups = \emptyset$,
   \item $\{(i\bar i),(\bar i i)\} \subseteq \Tups$ \ and \ $\{(ii),(\bar i\bar i)\}\cap\Tups = \emptyset$.
  \end{enumerate}
  These choices result in $2^{\frac{m-1}{2}}$ valid sets.
  \end{samepage}
 \end{enumerate}
\end{lemma}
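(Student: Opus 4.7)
The strategy is to analyze the orbits of $\tau$ acting on $\Tup$ and to read off how Definition~\ref{def:validset} constrains $S\cap O$ for each orbit~$O$. Since $\tau^4=\id$ and the only fixed point of $\tau$ in $[m]^2$ is $(aa)\notin\Tup$, every $\tau$-orbit in $\Tup$ has size exactly~$4$; these orbits fall into three types: the \emph{square} orbits $O_i=\{(ii),(i\bar i),(\bar i\bar i),(\bar ii)\}$ for $1\le i\le \tfrac{m-1}{2}$, the \emph{axis} orbits $A_i=\{(ai),(ia),(a\bar i),(\bar ia)\}$ for $1\le i\le\tfrac{m-1}{2}$, and the \emph{generic} orbits $\{(ij),(j\bar i),(\bar i\bar j),(\bar ji)\}$ with $i,j,\bar i,\bar j$ pairwise distinct and all different from~$a$. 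From Definition~\ref{def:validset}\,(\ref{def:validset:II}) one obtains $\tau^2(S)=S$, so each $\tau$-orbit splits into two $\tau^2$-pairs $\{x,\tau^2 x\}$, and $S$ contains exactly one pair from each orbit.

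The axis and square cases are settled directly. For an axis orbit~$A_i$, the pair $\{(ai),(a\bar i)\}$ would contribute two elements to row~$a$, but Definition~\ref{def:validset}\,(\ref{def:validset:III}) forces $|p^{-1}(a)|=0$; hence $S$ must contain $\{(ia),(\bar ia)\}$, and these points sit in the solid region of Figure~\ref{fig:combinatorialsets} since $i<a<\bar i$. For a square orbit~$O_i$, both possible pairs contribute one element to row~$i$ and one to row~$\bar i$, so Definition~\ref{def:validset}\,(\ref{def:validset:III}) is insensitive to the choice; this is exactly the free binary decision (a)/(b) of condition~(3) of the lemma, and accounts for the factor~$2^{(m-1)/2}$.

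For generic orbits I first record the geometric fact that the two $\tau^2$-pairs are of opposite flavor: one consists entirely of solid points and the other entirely of dotted points. This is a short case analysis using that solidity, i.e.\ $\min(x,\bar x)<y<\max(x,\bar x)$, is invariant under $\tau^2\colon(x,y)\mapsto(\bar x,\bar y)$, while $\tau\colon(i,j)\mapsto(j,\bar i)$ swaps ``strictly between the two diagonals'' with ``strictly outside them''. Thus conditions~(1) and~(2) of the lemma amount to the statement that the solid pair of every generic orbit belongs to~$S$.

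What remains, and is the technical heart of the proof, is to show that Definition~\ref{def:validset} actually forces $S$ to select the solid pair of every generic orbit. I plan an induction on $i=1,2,\ldots,\tfrac{m-1}{2}$ that peels off the row-pairs $\{i,\bar i\}$ from the outside inward. For the base $i=1$, row~$1$ must contain $m-1$ elements of~$S$; after subtracting the one contribution each from $O_1$ and~$A_1$, the remaining $m-3$ must come from generic orbits meeting row~$1$, and there are exactly $m-3$ such orbits, so each must activate its $\{1,\bar 1\}$-pair -- which by the previous paragraph is the solid one. In the inductive step, once the generic orbits involving a pair $\{j,\bar j\}$ with $j<i$ have been forced to activate $\{j,\bar j\}$ (and hence contribute nothing to row~$i$), the remaining generic orbits meeting row~$i$ are precisely those paired with $\{j',\bar{j'}\}$ for some $j'>i$; they number $2\bigl(\tfrac{m-1}{2}-i\bigr)=m-1-2i$, which exactly matches the still-unfilled row-$i$ deficit $\bar i-i-2$, forcing every one of them to activate $\{i,\bar i\}$ (the solid pair). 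The main obstacle is showing that Definition~\ref{def:validset}\,(\ref{def:validset:III}) admits a unique activation pattern on generic orbits; this ``peeling'' from the outermost row-pair inward is what keeps each step of the induction uniquely determined.
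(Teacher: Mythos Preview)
Your argument is correct. The key observations---that $\tau^2(S)=S$ so that $S$ is a union of $\tau^2$-pairs, exactly one per $\tau$-orbit, and that for a generic point $(i,j)$ ``solid'' is equivalent to $|i-a|>|j-a|$, whence $\tau$ interchanges solid and dotted---are clean and make the row-by-row induction transparent. The base case and inductive count are right: row~$i$ needs $m+1-2i$ elements, $O_i$ and $A_i$ each give one, the already-forced generic orbits (paired with some $\{j,\bar j\}$, $j<i$) give zero, and the $2\bigl(\tfrac{m-1}{2}-i\bigr)=m-1-2i$ remaining generic orbits must therefore all activate their $\{i,\bar i\}$-pair, which is the solid one since $|i-a|>|j'-a|$. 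You should add one sentence for the converse (any set of the described form is valid): conditions~(\ref{def:validset:I}) and~(\ref{def:validset:II}) of Definition~\ref{def:validset} are automatic from the one-pair-per-orbit structure, and condition~(\ref{def:validset:III}) holds for $i\le\tfrac{m-1}{2}$ by construction, for $i=a$ because row~$a$ meets only the excluded halves of axis orbits, and for $i>\tfrac{m+1}{2}$ by the $\tau^2$-invariance of~$S$.

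The paper's proof reaches the same conclusion by the same outer-to-inner row peeling, but argues more directly from $\tau(S)\cap S=\emptyset$ and the row counts, without first passing to the $\tau$-orbit decomposition or isolating the $\tau^2$-invariance. Concretely, it shows that in row~$1$ exactly one of $(11),(1m)$ is missing, couples this to the choice in row~$m$ via $\tau(1m)=(mm)$, and then inductively excludes the two boundary entries of each subsequent row using the already-determined rows. Your orbit viewpoint is a genuine reorganization: it explains \emph{a priori} why the binary choices live only on the diagonals (the square orbits) and why the off-diagonal structure is rigid, whereas the paper discovers this row by row. The paper's version is shorter to write down; yours makes the symmetry $(i,j)\mapsto(\bar i,\bar j)$ and the solid/dotted dichotomy structurally visible.
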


\begin{proof}
As indicated in Figure~\ref{fig:combinatorialsets},
for each tuple $(ij)$ we call $i$ the \emph{row} of $(ij)$.
For $\Tups$ to be valid,
according to Def.~\ref{def:validset}(\ref{def:validset:III}),
$\Tups$ must contain $|i-\bar i|$ elements in row $i$
and
according to Def.~\ref{def:validset}(\ref{def:validset:II}),
$\tau(\tup) \notin \Tups$ for all $\tup \in \Tups$.

In particular, $\Tups$ must contain $m-1$ elements in row 1.
If $(11) \in \Tups$, then $(1m) \notin \Tups$,
because $\tau(11)=(1m)$.
Hence there are only two possibilities:
\textbf{(a):} $\{ (1j) \mid 1 \leq j < m \} \subseteq \Tups$
or
\textbf{(b):}
$\{ (1j) \mid 1 < j \leq m \} \subseteq \Tups$.
By symmetry,
for row~$m$ we get
\textbf{(a'):}
$\{ (mj) \mid 1 \leq j < m \} \subseteq \Tups$
or
\textbf{(b'):}
$\{ (mj) \mid 1 < j \leq m \} \subseteq \Tups$.
But since $\tau(1m) = (mm)$ and $\tau(m1) = (11)$,
the fact $\tau(\Tups) = \iota(\Tups)$
implies that
\textbf{(a)} iff \textbf{(b')}
and that \textbf{(a')} iff \textbf{(b)}.
We are left with the two possibilities
$\big($\textbf{(a)} and \textbf{(b')}$\big)$
or $\big($\textbf{(a')} and \textbf{(b)}$\big)$.

Now consider row 2. We have $\tau(21) = (1,m-1) \in \Tups$
and hence $(21) \notin \Tups$.
In the same manner we see $(2m) \notin \Tups$.
We are left to choose $m-3$ elements from the $m-2$ remaining elements in row 2.
The same argument as for row 1 gives two possibilities:
\textbf{(a):} $\{ (2j) \mid 2 \leq j < m-1 \} \subseteq \Tups$
or
\textbf{(b):}
$\{ (2j) \mid 2 < j \leq m-1 \} \subseteq \Tups$.
Analogously for row $m-1$ we have
\textbf{(a'):} $\{ (m-1,j) \mid 2 \leq j < m-1 \} \subseteq \Tups$
or
\textbf{(b'):}
$\{ (m-1,j) \mid 2 < j \leq m-1 \} \subseteq \Tups$.
With the same reasoning
as for the rows $1$ and $m$ we get
\textbf{(a)} iff \textbf{(b')}
and that \textbf{(a')} iff \textbf{(b)}.
Again we are left with the two possibilities
$\big($\textbf{(a)} and \textbf{(b')}$\big)$
or $\big($\textbf{(a')} and \textbf{(b)}$\big)$.

Continuing these arguments we end up with $2^{\frac{m-1}2}$ possibilities.
It is easy to see that each of these possibilities gives a valid set.
\end{proof}

The following claim finishes the proof of 
Lemma~\ref{le:NEU}.

\begin{claim}
All nonzero triple labelings~$J$ have the same coefficient of $\omegaXM$
in $\eval_\Hy(AJ)$.
\end{claim}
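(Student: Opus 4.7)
The plan is to show that for every nonzero triple labeling $J$, the evaluation $\eval_\Hy(AJ)$ equals $\epsilon(J)\cdot\omegaXM$ as polynomials, for some sign $\epsilon(J)\in\{\pm 1\}$, and then prove that $\epsilon(J)$ does not depend on $J$. The coefficient of $\omegaXM$ will then be the common value $\epsilon$ across all nonzero labelings.

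For the first part, I would factor
\[
\eval_\Hy(AJ) = \prod_{k=1}^3 \det(A^{(k)}J^{(k)}|_{e^{(k)}})\cdot\prod_{k=1}^3\prod_{y\in V^{(k)}}\bra 1|A^{(k)}J^{(k)}(y)\ket.
\]
By Claims \ref{cla:labelinV} and \ref{cla:mappinglabel} together with the special form of $A^{(k)}$ in \eqref{eq:matrixtriple}, each singleton factor equals $\omegaX_{i(y)}^{(k)}$ with $i(y)\neq a$, and the multiplicities assemble into $\prod_{k,i}(\omegaX_i^{(k)})^{|i-\bar i|}$, which is independent of $J$. For the three big determinants, a cofactor expansion along the first column of \eqref{eq:matrixtriple} yields $\det A^{(k)}=\omegaX_a^{(k)}$ as a pure monomial. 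Since $J^{(k)}$ is bijective on $e^{(k)}$, the matrix $J^{(k)}|_{e^{(k)}}$ is a permutation matrix, so $\det(A^{(k)}J^{(k)}|_{e^{(k)}}) = \epsilon^{(k)}(J)\cdot\omegaX_a^{(k)}$ for some sign $\epsilon^{(k)}(J)\in\{\pm1\}$. Multiplying everything yields exactly $\eval_\Hy(AJ)=\epsilon(J)\cdot\omegaXM$ with $\epsilon(J):=\prod_k\epsilon^{(k)}(J)$ and no other monomials.

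To prove $\epsilon$ is $J$-independent, I would invoke Claim \ref{cla:invarineachVk} to see that $\epsilon$ is invariant under reorderings of the vertices inside each $V^{(k)}$, so by Proposition \ref{pro:nonzerovalid} it suffices to compare the signs attached to two valid sets. By Lemma \ref{lem:classifyvalid} any two valid sets are connected by a sequence of elementary flips $x_i\leftrightarrow\overline{x_i}$, so I would reduce the problem to checking invariance under a single flip at some position $i$. Such a flip exchanges the pairs $\{(ii),(\bar i\bar i)\}$ and $\{(i\bar i),(\bar i i)\}$ in both $\Tups$ and $\iota\Tups$; via Claim \ref{cla:tauiota} this modifies $J^{(k)}$, for each fixed $k$, at exactly four vertices of $V^{(k+1)}\cup V^{(k+2)}$, permuting the four basis vectors $|ii\ket,|\bar i\bar i\ket,|i\bar i\ket,|\bar i i\ket$ among them according to two disjoint transpositions. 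The column-permutation of $J^{(k)}|_{e^{(k)}}$ is therefore changed by an even permutation, so each $\epsilon^{(k)}$, and hence $\epsilon$, is unchanged.

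The main obstacle is this last step: one needs to track carefully, for each $k\in\{1,2,3\}$ separately, which four vertices of $V^{(k+1)}\cup V^{(k+2)}$ are affected by the flip and the precise reassignment of basis vectors there, in order to confirm that it always forms a product of two disjoint transpositions rather than a $4$-cycle (whose sign would flip $\epsilon^{(k)}$). The symmetric roles played by the rotation $\tau$ and the complement $\iota$, together with their commutativity $\tau\iota=\iota\tau$, are what make this analysis uniform across all three values of $k$ and guarantee an even total permutation.
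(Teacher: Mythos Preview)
Your approach is essentially the same as the paper's, with one pleasant addition: you first establish explicitly that $\eval_\Hy(AJ)=\epsilon(J)\cdot\omegaXM$ as a pure monomial (via $\det A^{(k)}=\omegaX_a^{(k)}$ and the permutation--matrix argument), whereas the paper compares $\eval_{e^{(k)}}(AJ)$ and $\eval_{e^{(k)}}(AJ')$ directly without isolating this monomial beforehand. Both then reduce, via Proposition~\ref{pro:nonzerovalid} and Lemma~\ref{lem:classifyvalid}, to a single flip $x_i\leftrightarrow\overline{x_i}$ and check that the induced column permutation on each $e^{(k)}$ is even.

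The ``main obstacle'' you flag is exactly what the paper carries out: it writes down the six affected triples $t_{110},t_{001},t_{101},t_{010},t_{011},t_{100}$, observes that the same multiset of triples occurs in $J$ and $J'$ (only reshuffled among $V^{(1)},V^{(2)},V^{(3)}$), and then reads off for $J^{(1)}$ that the four values $ii,\bar i\bar i,i\bar i,\bar i i$ are swapped between $V^{(2)}$ and $V^{(3)}$ by two transpositions, giving $\eval_{e^{(1)}}(AJ)=(-1)^2\eval_{e^{(1)}}(AJ')$; the cases $k=2,3$ are declared analogous. One caveat worth making explicit in your write-up: whether the induced permutation on each $e^{(k)}$ is two disjoint transpositions or a $4$-cycle depends on which representative of $J'$ (within the freedom of Claim~\ref{cla:invarineachVk}) you pick; the paper's assertion holds for a suitable choice, and since $\epsilon(J)=\prod_k\epsilon^{(k)}(J)$ is representative-independent this suffices.
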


\begin{proof}
Take two nonzero triple labelings $J$ and~$J'$.
According to Proposition~\ref{pro:nonzerovalid},
both sets $J^{(1)}(V^{(3)})$ and ${J'}^{(1)}(V^{(3)})$ are valid sets.
Because of Lemma~\ref{lem:classifyvalid},
it suffices to consider only the case where $J^{(1)}(V^{(3)})$ and $J'^{(1)}(V^{(3)})$
differ by a single involution $\sigma\colon \Tup \to \Tup$,
where for some fixed $1 \leq i \leq \frac{m-1}2$ we have
$\sigma(ii)=(i \bar i)$ and $\sigma(\bar i\bar i)=(\bar i i)$,
and $\sigma$ is constant on all other pairs.

We analyze the labels that are affected by~$\sigma$.
We only perform the analysis for one of the two symmetric cases,
namely for $\{|ii\ket, |\bar i\bar i\ket\} \subseteq J^{(1)}(V^{(3)})$.
Note that this implies
\begin{align}
\label{eq:ibariformat}
\tag{\ensuremath{\diamondsuit}}
\big\{\big(|ii\ket,|i\bar i\ket,|\bar i i\ket\big), 
\big(|\bar i\bar i\ket,|\bar i i\ket,|i \bar i\ket\big)\big\} \subseteq J(V^{(3)}),
\end{align}
according to Claim~\ref{cla:mappinglabel}.
We adapt the notation from \eqref{eq:mamutriples} to our special situation
and write $\tri{000} \coloneqq \tri{\bar i \bar i\bar i}$, \
$\tri{001} \coloneqq \tri{\bar i\bar i i}$, \ $\ldots$, \ $\tri{111} \coloneqq \tri{iii}$.
Using this notation, \eqref{eq:ibariformat}~reads as follows:
$\{\tri{110},\tri{001}\}\subseteq J(V^{(3)})$. Using Claim~\ref{cla:tauiota}
we get
\[
\{\tri{101},\tri{010}\}\subseteq J(V^{(2)}), \ \
\{\tri{011},\tri{100}\}\subseteq J(V^{(1)}).
\]
Applying $\sigma$ to $J^{(1)}(V^{(3)})$,
we can use Claim~\ref{cla:mappinglabel} again to get
\[
\big\{\big(|i\bar i\ket,|\bar i\bar i\ket,|\bar i i\ket\big),\big(|\bar i i\ket,|i i\ket,|i \bar i\ket\big) \big\}\subseteq J'(V^{(3)}).
\]
Applying Claim~\ref{cla:tauiota} and using our short syntax, we get:
\begin{align*}
 \{\tri{100},\tri{011}\}\subseteq J'(V^{(3)}), \\
 \{\tri{001},\tri{110}\}\subseteq J'(V^{(2)}), \\
 \{\tri{010},\tri{101}\}\subseteq J'(V^{(1)}).
\end{align*}
We see that exactly the same triples occur in~$J(V(\Hy))$ as in $J'(V(\Hy))$.
We focus now on $J^{(1)}$ and $J'^{(1)}$ and see that:
\[
 \{(ii),(\bar i\bar i) \}\subseteq J^{(1)}(V^{(3)}) \text{ and } \{(i\bar i),(\bar i i) \}\subseteq J^{(1)}(V^{(2)})
\]
and
\[
 \{(i\bar i),(\bar i i) \}\subseteq {J'}^{(1)}(V^{(3)}) \text{ and } \{(\bar i \bar i),(ii) \}\subseteq {J'}^{(1)}(V^{(2)}).
\]
This gives exactly two switches of positions in $e^{(1)} = V^{(2)} \dotcup V^{(3)} \dotcup \{\yv^0\}$,
hence \[\eval_{e^{(1)}}(AJ) = (-1)^2\eval_{e^{(1)}}(AJ') = \eval_{e^{(1)}}(AJ').\]
Analogously we can prove that $\eval_{e^{(k)}}(AJ) = \eval_{e^{(k)}}(AJ')$
for all $k\in\{2,3\}$ 
and therefore $\eval_\Hy(AJ) = \eval_\Hy(AJ')$.
\end{proof}

%


\providecommand{\bysame}{\leavevmode\hbox to3em{\hrulefill}\thinspace}
\providecommand{\MR}{\relax\ifhmode\unskip\space\fi MR }
\providecommand{\MRhref}[2]{%
  \href{http://www.ams.org/mathscinet-getitem?mr=#1}{#2}
}
\providecommand{\href}[2]{#2}


\end{document}